\newcommand\be{\begin{equation}}
\newcommand\ee{\end{equation}}
\newcommand\p{\partial}
\newcommand\Tr{{\rm Tr}\,}
\newcommand\diag{{\rm diag}\,}
\newcommand{\<}{\left <}
\renewcommand{\>}{\right >}
\newcommand{\Normord}[1]{:\mathrel{#1}:}
\def\lvac{\left <0\right |}
\def\rvac{\left |0\right >}
\DeclareMathOperator{\res}{Res}
\DeclareMathOperator{\odd}{odd}
\DeclareMathOperator{\even}{even}
\DeclareMathOperator{\gl}{\mathfrak{gl}}
\DeclareMathOperator{\go}{\mathfrak{go}}
\DeclareMathOperator{\DP}{DP}
\DeclareMathOperator{\Gr}{Gr}
\DeclareMathOperator{\KdV}{KdV}
\DeclareMathOperator{\BKP}{BKP}
\DeclareMathOperator{\sppan}{span}
\def\normord{ {\scriptstyle  \genfrac{}{}{0pt}{2}{\bullet}{\bullet} }}
\newtheorem{theorem}{Theorem}
\newtheorem{lemma}{Lemma}[section]
\newtheorem{proposition}[lemma]{Proposition}
\newtheorem{corollary}[lemma]{Corollary}
\newtheorem{remark}{Remark}[section]
\newtheorem{conjecture}{Conjecture}[section]
\newtheorem*{theorem*}{Theorem}
\newtheorem{definition}{Definition}
\numberwithin{equation}{section}
\title[Generalized BGW tau-function as a hypergeometric BKP solution] {Generalized Br\'ezin--Gross--Witten tau-function as a hypergeometric solution of the BKP hierarchy}
\author{Alexander Alexandrov}
\address{IBS Center for Geometry and Physics,
	Pohang University of Science and Technology (POSTECH),
	77 Cheongam-ro, Nam-gu, Pohang, Gyeongbuk, 37673, Korea
}
\email{ {\tt alexandrovsash at gmail.com}}
\begin{document}

\begin{abstract}
In this paper, we prove that the generalized Br\'ezin--Gross--Witten tau-function is a hypergeometric solution of the BKP hierarchy with simple weight generating function.
We claim that it describes a spin version of the strictly monotone Hurwitz numbers. A family of the hypergeometric tau-functions of the BKP hierarchy, corresponding to the rational weight generating functions, is investigated. In particular, the cut-and-join operators are constructed, and the explicit description of the BKP Sato Grassmannian points is derived.
Representatives of this family can be associated with interesting families of spin Hurwitz numbers including a spin version of the 
monotone Hurwitz numbers.
\end{abstract}

\maketitle


{\small \bf MSC 2020: 37K10,14N10,17B80,81R10.}

\tableofcontents

\def\thefootnote{\arabic{footnote}}

\setcounter{equation}{0}

\section{Introduction}

Spin Hurwitz numbers were introduced by Eskin, Okounkov, and Pandharipande \cite{EOP}. Analogously to the ordinary Hurwitz numbers, they count the ramified coverings, but this time of the surfaces with spin structure. This is an interesting class of enumerative geometry invariants, closely related to the fundamental structures in representation theory and mathematical physics. In particular, the relation between spin Hurwitz numbers and Schur Q-functions was noticed by Gunningham \cite{Guni}. On the representation theory side the spin Hurwitz numbers are related to the characters of the Sergeev group.

Generating functions of the weighted double spin Hurwitz numbers were identified with the tau-functions of the 2-component BKP integrable hierarchy by Mironov, Morozov and Natanzon \cite{MMNQ}. Corresponding class  of the BKP tau-functions, so-called {\em hypergeometric} tau-functions was introduced and investigated by Orlov \cite{OBKP}. In this paper, we introduce and investigate an infinite parametric family of the  hypergeometric tau-functions of the 2-component BKP hierarchy. This family is associated with the rational weight generating functions.

The motivation for the investigation of this family is two-fold. On the one hand, the simplest representative of this family is given by the generalized Br\'ezin--Gross--Witten (BGW) tau-function. Relation between the generalized BGW tau-function and a hypergeometric solution of the BKP hierarchy was conjectured in  \cite{ABKP}, and its proof is one of the main results of this paper. On the other hand, we claim that this family contains several interesting examples of the generating functions of the weighted spin Hurwitz numbers, including the natural spin analogs of the monotone, strictly monotone, and Bousquet-M\'elou-Schaeffer Hurwitz numbers. In particular, the generalized BGW tau-function can be associated with the generating function of the strictly monotone spin Hurwitz numbers. Ordinary strictly monotone Hurwitz numbers describe Grothendieck's dessins d'enfants and hypermaps, and the spin case should also be related to the interesting invariants of enumerative geometry and combinatorics.

For the ordinary Hurwitz numbers, the rational weight generating functions can be obtained from the combination of the Schur functions with a particular choice of the variables. In the spin case, the correspondence is more complicated and will be considered elsewhere. Therefore, we do not discuss the detailed interpretation of the obtained generating functions in terms of spin Hurwitz numbers and relations between these numbers and intersection theory invariants. Instead in this paper, we focus on the integrable properties of this family. For this purpose we use the standard ingredients of the theory of integrable hierarchies including the free fermions, vertex operators, and the Sato Grassmannian. However, for the BKP and KP hierarchies, corresponding to the spin and ordinary Hurwitz numbers respectively, these ingredients are different, because of the difference between the underlying symmetry algebras, $\go(\infty)$ and $\gl(\infty)$.

In particular, we construct the basis for the corresponding points for the BKP Sato Grassmannian and derive associated Kac--Schwarz operators. Moreover, we construct the cut-and-join operators, which, for the polynomial weight generating functions, are the finite degree differential operators. The Kac--Schwarz description allows us to construct the quantum spectral curves and to derive the linear constraints for the tau-functions. We expect that these ingredients will be useful for further investigation of the spin Hurwitz numbers. In particular, they should be useful for the
construction of the Chekhov--Eynard--Orantin topological recursion. The description obtained in this paper is the spin analog of the description for the ordinary Hurwitz numbers, considered in \cite{ALS,AN, KZ,Zog}.

A few days after the current paper was posted on the arXiv, a new paper \cite{LC} appeared with an alternative proof of Theorem \ref{T3} via an action of the Virasoro operators on the Schur Q-functions.

\bigbreak

\noindent{\bf Notation.} In this paper  we denote the families of variables or parameters, finite or infinite, by bold symbols. In particular, $\bf t$ denotes the set of odd variables,
\be
{\bf t}=\{t_1,t_3,t_5,\dots\}.
\ee

A partition $\lambda$ is {\em strict}, if $\lambda_1>\lambda_2>\lambda_3>\dots>\lambda_{\ell(\lambda)}>\lambda_{\ell(\lambda)+1}=0$,  where $\ell(\lambda)$ is the {\em length} of the partition. We denote the set of strict partitions, including the empty one, by $\DP$.

\bigbreak

\noindent{\bf Acknowledgments.} This work was supported by IBS-R003-D1.  We thank the anonymous referee of the paper for the comments that led to a substantial
improvement of the paper.

\subsection{Organization of the paper} In Section \ref{SS2} we remind the reader the neutral fermion formalism and describe associated vertex operators. Section \ref{SS3} is devoted to the $w_{1+\infty}^B$ algebra and its central extension, $W_{1+\infty}^B$. In particular, we explicitly describe the convenient bases for these algebras. In Section \ref{S4} we outline the Sato Grassmannian description of the BKP hierarchy. In Section \ref{GBGW} we describe the diagonal group element, associated with the generalized BGW tau-function and prove that this tau-function is a hypergeometric solution of the BKP hierarchy. The generalization of this tau-function to the family with rational weight generating functions is investigated in Section \ref{S6}.

\section{Neutral fermions and boson-fermion correspondence}\label{SS2}

In this section we remind the reader the neutral fermion formalism and boson-fermion correspondence in the framework of the BKP hierarchy.
More details can be found in \cite{JMV,JMBKP,You,vdLASM,OBKP}.


\subsection{Neutral fermions}

Let $\phi_k$, $k\in {\mathbb Z}$ be the neutral free fermions satisfying the canonical anticommutation relations
\be\label{ac}
\left\{\phi_k,\phi_m\right\}=(-1)^k\delta_{k+m,0}.
\ee
Note that $\phi_0^2=1/2$. These relations define the Clifford algebra as the associative algebra.

For the vacuum $\rvac$ and the co-vacuum $\lvac$ vectors, satisfying
\be\label{vp1}
\phi_m \rvac =0,\,\,\,\,\, \lvac \phi_{-m}=0,\,\,\,\,\, m<0
\ee
the elements $\phi_{k_1}\phi_{k_2}\dots \phi_{k_m}\rvac$  with $k_1>k_2>\dots>k_m\geq0$ form a basis of a version of infinite wedge space (or neutral fermion Fock space) ${\mathcal F}_B$,
\be\label{bas1}
{\mathcal F}_B =\sppan\left\{ \phi_{k_1}\phi_{k_2}\dots \phi_{k_m}\rvac \ |\ k_1>k_2>\dots>k_m\geq 0 \right\}, 
\ee
and its dual
\be\label{bas2}
{\mathcal F}_B^* =\sppan\left\{\lvac  \phi_{k_m}\dots  \phi_{k_2}\phi_{k_1} \ |\  k_1<k_2<\dots<k_m\leq0  \right\}. 
\ee
The space ${\mathcal F}_B$ splits into two subspaces
\be
{\mathcal F}_B={\mathcal F}_B^0\oplus {\mathcal F}_B^1,
\ee
where ${\mathcal F}_B^0$ and ${\mathcal F}_B^1$ denote the subspaces with even and odd numbers of generators respectively. The same decomposition exists for ${\mathcal F}_B^*$.

There is a nondegenerate bilinear pairing ${\mathcal F}_B \times {\mathcal F}_B^* \rightarrow {\mathbb C}$, and the pairing of $\left<U\right | \in {\mathcal F}_B^*$ and $ \left|V\right>\in {\mathcal F}_B$ is denoted by $\left<U |V\right>$ with
\be
\left<0|0\right>=1.
\ee
 The {\em vacuum expectation value} of the Clifford algebra element $a$ is a pairing of $\lvac$ and $a\rvac$ which is denoted by  $\lvac a \rvac$. It is uniquely defined by the anticommutation relations (\ref{ac}), property (\ref{vp1}), and a relation 
 \be
 \lvac \phi_0 \rvac =0. 
 \ee
In particular, if $a$ is an odd element of the Clifford algebra, then $\lvac a \rvac =0$. It is easy to see that the bases in (\ref{bas1}) and  (\ref{bas2}) are orthogonal. 
Let us focus on the space ${\mathcal F}_B^0$. The basis can be labelled by strict partitions $\lambda \in \DP$ in the following way: 
\be\label{lambda}
\left|\lambda\right> = 
\begin{cases}
 \phi_{\lambda_1}\phi_{\lambda_2}\dots \phi_{\lambda_{\ell(\lambda)}}\rvac     & \mathrm{for}  \,\,\, \ell(\lambda)=0 \mod 2,\\
\sqrt{2} \phi_{\lambda_1}\phi_{\lambda_2}\dots \phi_{\lambda_{\ell(\lambda)}}\phi_0 \rvac  & \mathrm{for}   \,\,\, \ell(\lambda)=1 \mod 2.
\end{cases}
\ee
For the dual space ${\mathcal F}_B^{0*}$ we have a similar basis. From the anticommutation relations we have:

\begin{proposition} 
$\forall$ $\lambda, \mu \in \DP$
\be
\left<\lambda|\mu\right>=(-1)^{|\lambda|} \delta_{\lambda,\mu}.
\ee
\end{proposition}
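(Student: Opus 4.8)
The plan is to reduce $\langle\lambda|\mu\rangle$ to a vacuum expectation value of a product of the $\phi_k$ and to evaluate it using only the Clifford relations~(\ref{ac}) and the vacuum conditions~(\ref{vp1}). Inserting the definition~(\ref{lambda}) of $|\mu\rangle$ and the corresponding definition of the dual vector $\langle\lambda|$ (obtained from~(\ref{lambda}) by $\phi_k\mapsto\phi_{-k}$ together with reversal of the order of the factors), the pairing becomes, when both lengths are even,
\[
\langle\lambda|\mu\rangle=\lvac\phi_{-\lambda_{\ell(\lambda)}}\cdots\phi_{-\lambda_1}\phi_{\mu_1}\cdots\phi_{\mu_{\ell(\mu)}}\rvac ,
\]
and the same expression multiplied by $2$ and flanked by a $\phi_0$ next to $\lvac$ and a $\phi_0$ next to $\rvac$ when both lengths are odd (the mixed-parity case is handled below). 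The only fact needed about two-point functions is that $\lvac\phi_a\phi_b\rvac=0$ unless $a+b=0$ and $a\le 0\le b$, in which case it equals $(-1)^a$, with $\lvac\phi_0\phi_0\rvac=\tfrac12$; this follows at once from~(\ref{ac})--(\ref{vp1}).

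For the vanishing statement I would invoke Wick's theorem for neutral fermions, by which the expectation value above equals the Pfaffian of the antisymmetric matrix of two-point functions of its factors. Among the $\phi_{-\lambda_i}$ and a possible bra $\phi_0$ all indices are $\le 0$, and among the $\phi_{\mu_j}$ and a possible ket $\phi_0$ all indices are $\ge 0$; since the parts of a strict partition are strictly positive, no nonzero two-point function joins two bra fermions or two ket fermions. Hence a contributing matching must pair each $\phi_{-\lambda_i}$ with some $\phi_{\mu_j}$ for which $\lambda_i=\mu_j$, and a bra $\phi_0$ only with a ket $\phi_0$; such a complete matching exists exactly when the parts of $\lambda$ and of $\mu$ agree as sets, i.e.\ --- both partitions being strict --- when $\lambda=\mu$. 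Therefore $\langle\lambda|\mu\rangle=0$ whenever $\lambda\neq\mu$, which also settles the mixed-parity case, where an unpaired $\phi_0$ has no admissible partner.

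For $\lambda=\mu$ the matching is moreover unique, since the parts are distinct, so $\phi_{-\lambda_i}$ is forced to pair with $\phi_{\lambda_i}$. Instead of computing the Pfaffian sign of this matching I would evaluate the expectation value recursively from the inside out. The two innermost factors above form the adjacent pair $\phi_{-\lambda_1}\phi_{\lambda_1}=(-1)^{\lambda_1}-\phi_{\lambda_1}\phi_{-\lambda_1}$, and in the expectation value the second term vanishes because $\phi_{-\lambda_1}$ anticommutes with each remaining factor to its right (no two of the indices involved sum to zero, by distinctness) and then annihilates $\rvac$. Since the removed pair was adjacent, no sign is produced, and one is left with $(-1)^{\lambda_1}$ times an expression of exactly the same form for the smaller set $\{\lambda_2,\dots,\lambda_{\ell(\lambda)}\}$ of distinct positive integers. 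Iterating strips off the product $\prod_i(-1)^{\lambda_i}$ and leaves $\langle 0|0\rangle=1$ (both lengths even) or $2\lvac\phi_0\phi_0\rvac=1$ (both odd), so $\langle\lambda|\lambda\rangle=\prod_i(-1)^{\lambda_i}=(-1)^{|\lambda|}$.

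The delicate point throughout is the sign, and the argument is arranged precisely so that no sign ever arises: this works because, with the factors of $\langle\lambda|$ in the order opposite to~(\ref{lambda}), the surviving contractions are nested --- a ``rainbow'' --- and can be stripped one adjacent pair at a time; had $\langle\lambda|$ been written with its factors in the same order as~(\ref{lambda}), one would instead pick up a spurious $(-1)^{\ell(\lambda)(\ell(\lambda)-1)/2}$, so the reversed order is exactly the right normalization for the dual basis. With that understood, the remaining ingredients --- the two-point function and the recursion --- are routine consequences of~(\ref{ac}) and~(\ref{vp1}).
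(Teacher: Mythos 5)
Your proof is correct and follows the same route the paper intends: the paper offers no written proof beyond ``from the anticommutation relations,'' and your argument is precisely a careful verification from (\ref{ac}), (\ref{vp1}) and $\phi_0^2=\tfrac12$, with the vanishing for $\lambda\neq\mu$ handled by the (standard) Pfaffian/Wick expansion and the diagonal value $(-1)^{|\lambda|}$ obtained by stripping nested adjacent pairs. The sign bookkeeping, the treatment of the extra $\phi_0$ in the odd-length and mixed-parity cases, and the factor $2\lvac\phi_0\phi_0\rvac=1$ all check out.
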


It is easy to see that
\be
\lvac \phi_k\phi_{m} \rvac =\delta_{k+m,0}H[m], 
\ee
where
\be\label{Ha}
H[m]  =
\begin{cases}
\displaystyle{0}  & \mathrm{for} \quad m<0,\\
\displaystyle{\frac{1}{2}}  & \mathrm{for} \quad m=0,\\
\displaystyle{(-1)^m} \,\,\,\,\,\, & \mathrm{for} \quad m>0.
\end{cases}
\ee

Bilinear combinations of neutral fermions $\phi_k\phi_m$ satisfy the commutation relations of the Lie algebra $\go(\infty)$. Let $\left(E_{i,j}\right)_{k,l}=\delta_{i,k}\delta_{j,l}$ be the standard basis of the matrix units $\left.\left\{E_{{i,j}}\right| i,j \in {\mathbb Z}\right\}$. Then $\phi_k\phi_{-m}$ corresponds \cite{JMBKP} to 
\be
F_{k,m}=(-1)^m E_{k,m}-(-1)^k E_{-m,-k}
\ee
with the commutation relations
\be\label{comF}
\left[F_{a,b},F_{c,d}\right]=(-1)^b\delta_{b,c}F_{a,d}-(-1)^a\delta_{a+c,0}F_{-b,d}+(-1)^b\delta_{b+d,0}F_{c,-a}-(-1)^a\delta_{a,d}F_{c,b}.
\ee

For the bilinear combinations of neutral fermions we introduce the {\em normal ordering} by $\Normord{\phi_k\phi_{m}}=\phi_k\phi_{m}-\lvac \phi_k\phi_{m} \rvac$. It is skew-symmetric 
\be
\Normord{\phi_k\phi_{m}}=-\Normord{\phi_m\phi_{k}},
\ee
in particular, $\Normord{\phi_k\phi_{k}}=0$.
The normal ordered quadratic combinations of neutral fermions satisfy the commutation relations of a central extension of the algebra $\go(\infty)$
\begin{multline}
[\Normord{\phi_a\phi_b},\Normord{\phi_c\phi_d}]= (-1)^b \delta_{b+c,0}\Normord{\phi_a\phi_d} -(-1)^a\delta_{a+c,0}\Normord{\phi_b\phi_d}\\
+ (-1)^b\delta_{b+d,0}\Normord{\phi_c\phi_a} -(-)^a\delta_{a+d,0}\Normord{\phi_c\phi_b} + (\delta_{c,b}\delta_{a,d}-\delta_{a-c,0}\delta_{b-d,0})( (-1)^{a}H[b]- (-1)^{b}H[a]),
\end{multline}
where $H[a]$ is given by (\ref{Ha}).

Let us consider the generating function
\be
\phi(z)=\sum_{k\in {\mathbb Z}}\phi_k z^k. 
\ee
It satisfies the anticommutation relation 
\be\label{phid}
\left\{\phi(z),\phi(w)\right\}=\delta(z+w).
\ee
Here we introduce the delta-function
\be\label{delta}
\delta(z-w)=\sum_{k\in {\mathbb Z}}\left(\frac{z}{w}\right)^k.
\ee
It satisfies
\be
\delta(z-w)f(z)=\delta(z-w)f(w)
\ee
for any formal series $f(z)\in {\mathbb C}[\![z,z^{-1}]\!]$, and can be represented as
\be\label{delta1}
2\delta(z+w)=\frac{z-w}{\underline{z}+w}-\frac{z-w}{\underline{w}+z}.
\ee
Here $\frac{1}{\underline{z}-w}$ denotes the Laurent series expansion in the
region $|z|>|w|$,
\be
 \frac{1}{\underline{z}-w}:=\frac{1}{z}\sum_{k=0}^\infty \left(\frac{w}{z}\right)^k.
\ee

Quadratic combinations of these generating functions generate a Lie algebra with the following commutation relations
\begin{multline}\label{bilc}
\left[\phi_1(z_1)\phi(w_1),\phi(z_2)\phi(w_2)\right]=\delta(w_1+z_2)\phi(z_1)\phi(w_2)-\delta(z_1+z_2)\phi(w_1)\phi(w_2)\\
+\delta(w_1+w_2)\phi(z_2)\phi(z_1)-\delta(z_1+w_2)\phi(z_2)\phi(w_1).
\end{multline}
For the normal ordered operator we have
\be
\phi(z)\phi(w)=\Normord{\phi(z)\phi(w)}+\frac{1}{2} \frac{z-w}{\underline{z}+w}.
\ee


\subsection{Vertex operators}

For $k\in {\mathbb Z}_{\odd}$ we introduce the {\em bosonic} operators
\be
J_k=\frac{1}{2}\sum_{m\in {\mathbb Z}}(-1)^{m+1}\Normord{\phi_m\phi_{-m-k}}
\ee
satisfying a commutation relation of the Heisenberg algebra
\be\label{comJ}
\left[J_k,J_m\right]=\frac{k}{2}\delta_{k+m,0}.
\ee
From (\ref{vp1}) we have
\be
J_m \rvac =0,\,\,\,\,\, \lvac J_{-m}=0,\,\,\,\,\, m>0.
\ee

Let us consider the {\em vertex operator} for the BKP hierarchy introduced in \cite{JMBKP},
\be
\widehat{V}_B(z)=\exp\left(\sum_{k\in{\mathbb Z}_{\odd}^+}z^k t_k\right)\exp\left(-2\sum_{k\in{\mathbb Z}_{\odd}^+}\frac{1}{kz^k}\frac{\p}{\p t_k}\right).
\ee
These operators satisfy the anticommutation relations
\be\label{Vac}
\left\{\widehat{V}_B(z),\widehat{V}_B(w)\right\}=2 \delta(z+w)
\ee
similar to (\ref{phid}).

It is convenient to introduce the generating functions of the bosonic operators
\be
J_+({\bf t}) = \sum_{k\in {\mathbb Z}_{\odd}^+} t_k J_k,\,\,\,\,\,\,\, J_-({\bf s}) = \sum_{k\in {\mathbb Z}_{\odd}^+} s_k J_{-k}.
\ee
Then one has
\begin{equation}\label{Vtp}
\begin{split}
\widehat{V}_B(z)\lvac e^{J_+({\bf t})}&=2\lvac \phi_0  e^{J_+({\bf t})} \phi(z),\\
\widehat{V}_B(z)\lvac \phi_0 e^{J_+({\bf t})}&= \lvac e^{J_+({\bf t})} \phi(z).
\end{split}
\end{equation}

Let us consider a bilinear combination of the vertex operators $ \widehat{V}_B$,
\be
\widehat{V}_B(z,w)=\frac{1}{2}\widehat{V}_B(z) \widehat{V}_B(w).
\ee
Using the anticommutation relation (\ref{Vac}) it is easy to show that the vertex operators $\widehat{V}_B(z,w)$ satisfy the commutation relation, equivalent to the one described by (\ref{bilc}) for the bilinear combinations $\phi(z)\phi(w)$,
\begin{multline}\label{Vcommun}
\left[\widehat{V}_B(z_1,w_1),\widehat{V}_B(z_2,w_2)\right]= \delta(w_1+z_2)\widehat{V}_B(z_1,w_2)\\
-\delta(z_1+z_2)\widehat{V}_B(w_1,w_2)+\delta(w_1+w_2)\widehat{V}_B(z_2,z_1)-\delta(z_1+w_2)\widehat{V}_B(z_2,w_1).
\end{multline}
These vertex operators, dependent on two parameters, can be represented as
\be\label{Vr}
\widehat{V}_B(z,w)=\frac{1}{2}\frac{z-w}{\underline{z}+w}\exp\left(\sum_{k\in{\mathbb Z}_{\odd}^+}(z^k+w^k)t_k\right)\exp\left(-2\sum_{k\in{\mathbb Z}_{\odd}^+}\left(\frac{1}{kz^k}+\frac{1}{kw^k}\right)\frac{\p}{\p t_k}\right).
\ee

It is also convenient to consider a regularized version of the vertex operator, corresponding to $\Normord{\phi(z)\phi(-w)}$
\be\label{Ydd}
{\widehat Y}_B(z,w)=\widehat{V}_B(z,w)-\frac{1}{2}\frac{z-w}{\underline{z}+w},
\ee
or, equivalently
\be\label{Vreg}
{\widehat Y}_B(z,w)=\frac{1}{2}\left(\widehat{V}_B(z) \widehat{V}_B(w)-\frac{z-w}{\underline{z}+w}\right).
\ee
This expression is regular at $z=-w$, moreover, it is antisymmetric with respect to the permutation of $z$ and $w$,
\be\label{asim}
{\widehat Y}_B(z,w)=-{\widehat Y}_B(w,z).
\ee
From (\ref{Vtp}) it follows that
\be\label{ara}
{\widehat Y}_B(z,w) \lvac e^{J_+({\bf t}) }=\lvac e^{J_+({\bf t}) } \Normord{\phi(z)\phi(w)}.
\ee

\subsection{Boson-fermion correspondence}

For the neutral fermions the boson-fermion correspondence  \cite{You} describes an isomorphism
\be
\sigma_B^i:\,\,\,\,\, {\mathcal F}_B^i \simeq B^{(i)}={\mathbb C}[\![t_1,t_3,t_5,\dots ]\!]
\ee
for $i=0,1$. Here
\be
\sigma_B^i (\left| i\right> )=1,
\ee
where we introduce $\left|1\right>= \sqrt{2}\phi_0 \rvac$, and for both $i=0,1$ we have
\be
\sigma_B^i  J_{-k} (\sigma_B^i)^{-1}=\frac{k}{2}  t_k ,\,\,\,\,\,\,\, \sigma_B^i  J_k (\sigma_B^i)^{-1} = \frac{\p}{\p t_k}
\ee
for $k\in{\mathbb Z}_{\odd}^+$. The boson-fermion correspondence is given by
\be
\sigma_B^i(\left| a \right>)=\begin{cases}
\displaystyle{\left<1\right| e^{J_+({\bf t}) } \left| a \right>} \,\,\,\,\,\,\,\,\,\,\,\,\,\,\,\,\, \mathrm{for} \quad  \left| a \right> \in  {\mathcal F}_B^1,\\[4pt]
\displaystyle{\lvac e^{J_+({\bf t}) } \left| a \right> }\,\,\,\,\,\,\,\,\,\,\,\,\,\,\,\,\, \mathrm{for} \quad \left| a \right> \in  {\mathcal F}_B^0,
\end{cases}
\ee
where $\left<1\right|=\sqrt{2}\lvac \phi_0$. The boson-fermion correspondence between two different representations of the central extension of the $\go(\infty)$ algebra is given by
\be
\sigma_B^i \Normord{\phi(z)\phi(w)} (\sigma_B^i )^{-1}={\widehat Y}_B(z,w).
\ee
Below we will work only with ${\mathcal F}_B^0$ component of the fermionic Fock space and its bosonic counterpart. For them we denote the boson-fermion correspondence by $\sigma$.

\section{\texorpdfstring{$W_{1+\infty}^B$}--algebra}\label{SS3}

Content of this section is closely related to the results of van de Leur \cite{vdLASM}.

\subsection{\texorpdfstring{$w_{1+\infty}^B$}--algebra and its central extension}
Let us introduce the algebra $W^B_{1+\infty}$. Consider $w_{1+\infty}$, the algebra of diffeomorphisms  on the circle, 
\be
w_{1+\infty}=\sppan\{z^k\p_z^m \ | \ k\in {\mathbb Z}, m\in {\mathbb Z}_{\geq 0}\}.
\ee
Let $\iota$ be the anti-involution of $w_{1+\infty}$
\be
\iota (z)=-z, \,\,\,\,\, \iota (z\p_z)=-z \p_z.
\ee
We define
\be
w_{1+\infty}^B=\{ {\mathtt a} \in w_{1+\infty} \ |\ \iota({\mathtt a})=-{\mathtt a} \}.
\ee

For any $ {\mathtt a}  \in w_{1+\infty}$ we denote by $\bar{{\mathtt a}}$ the operator, obtained by the sign inversion of $z$,
\be
\overline{ {\mathtt a} }= {\mathtt a} \big|_{z\mapsto -z, \p_z\mapsto -\p_z}.
\ee
Then it is easy to see that 
\be\label{iott}
\iota ( {\mathtt a} )= \overline{ {\mathtt a} ^*},
\ee
where ${\mathtt a}^*\in w_{1+\infty}$ is the adjoint operator for which an identity
$$ 
\res_z \left(z^{-1} f(z)\,  {\mathtt a} \, g(z)  \right)
=\res_z \left(z^{-1} g(z)\,  {\mathtt a} ^*\, f(z) \right)
$$
holds for any commuting $f(z)$ and $g(z)$. Here 
\be
\res_z \sum_{k \in {\mathbb Z}} a_k z^k:= a_{-1}.
\ee
In particular, 
\be
(z^k\p_z^m)^*=z(-\p_z)^m z^{k-1}.
\ee
\begin{definition}\label{def1}
For any operator $ {\mathtt a} \in w_{1+\infty}$ we introduce a differential operator, acting in the bosonic Fock space ${\mathbb C}[\![t_1,t_3,t_5\dots ]\!]$:
\be
\widehat{W}_ {\mathtt a} ^B=\frac{1}{2}\res_w w^{-1} \left. {\mathtt a} _w \cdot {\widehat Y}_B(z,w)\right|_{z=-w}. 
\ee
\end{definition}
Here by $ {\mathtt a} _w$ we denote the operator $ {\mathtt a} $ acting on the space of functions of the variable $w$.

\begin{lemma}
For $ {\mathtt a} \in w_{1+\infty}$
\be
\left[\widehat{V}_B(z), \widehat{W}_{\mathtt a}^B\right]=\frac{1}{2}( {\mathtt a} _z-\iota( {\mathtt a} _z)) \cdot \widehat{V}_B(z).
\ee
\end{lemma}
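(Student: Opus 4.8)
The plan is to transport the identity to the neutral–fermion side, where both operators act linearly on the Fock space, carry out the resulting computation in the Clifford algebra, and translate the answer back through the boson–fermion correspondence.

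On the fermionic side set $W_{{\mathtt a}}=\tfrac12\res_w w^{-1}\big({\mathtt a}_w\,\Normord{\phi(z')\phi(w)}\big)\big|_{z'=-w}$; this is legitimate because $\widehat{Y}_B(z,w)$, hence $\Normord{\phi(z')\phi(w)}$, is regular at $z'=-w$ and ${\mathtt a}_w$ is applied before the specialization. Since ${\mathtt a}_w$, the specialization $z'=-w$ and $\res_w w^{-1}$ only touch the auxiliary variables, $(\ref{ara})$ (and its analogue with $\lvac\phi_0 e^{J_+}$ coming from the $i=1$ boson–fermion correspondence) gives $\widehat{W}_{{\mathtt a}}^B\,\lvac e^{J_+({\bf t})}=\lvac e^{J_+({\bf t})}\,W_{{\mathtt a}}$, and the same with a $\phi_0$ inserted. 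An operator on ${\mathbb C}[\![t_1,t_3,\dots]\!]$ is determined by its action on all covectors $\lvac e^{J_+({\bf t})}\left|a\right>$ with $\left|a\right>\in{\mathcal F}_B^0$, so it suffices to check the claim after applying both sides to $\lvac e^{J_+({\bf t})}$. Using $(\ref{Vtp})$ — once directly on $\widehat{V}_B(z)\lvac e^{J_+}$, and once more for $\widehat{V}_B(z)$ acting on $\lvac e^{J_+}W_{{\mathtt a}}$, which is again of the form $\sigma(\,\cdot\,)$ with $\cdot\in{\mathcal F}_B^0$ since $W_{{\mathtt a}}$ preserves ${\mathcal F}_B^0$ — one finds $\widehat{V}_B(z)\widehat{W}_{{\mathtt a}}^B\lvac e^{J_+}=2\lvac\phi_0 e^{J_+}\phi(z)W_{{\mathtt a}}$ and $\widehat{W}_{{\mathtt a}}^B\widehat{V}_B(z)\lvac e^{J_+}=2\lvac\phi_0 e^{J_+}W_{{\mathtt a}}\phi(z)$, whereas the right-hand side of the lemma, evaluated on $\lvac e^{J_+}$, equals $\lvac\phi_0 e^{J_+}\,({\mathtt a}_z-\iota({\mathtt a}_z))\phi(z)$. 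Thus everything reduces to the Clifford-algebra identity $[\phi(z),W_{{\mathtt a}}]=\tfrac12\big({\mathtt a}_z-\iota({\mathtt a}_z)\big)\phi(z)$.

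To prove the latter, I would first use $(\ref{phid})$ to get $[\phi(z),\phi(z')\phi(w)]=\{\phi(z),\phi(z')\}\phi(w)-\phi(z')\{\phi(z),\phi(w)\}=\delta(z+z')\phi(w)-\delta(z+w)\phi(z')$, the normal-ordering $c$-number dropping out of the bracket, and substitute into $W_{{\mathtt a}}$ (the bracket with $\phi(z)$ commutes with ${\mathtt a}_w$, with $z'=-w$ and with $\res_w w^{-1}$). In the first term ${\mathtt a}_w$ hits $\phi(w)$, the specialization turns $\delta(z+z')$ into $\delta(z-w)$, and $\delta(z-w)g(w)=\delta(z-w)g(z)$ together with $\res_w w^{-1}\delta(z-w)=1$ yields $\tfrac12\,{\mathtt a}_z\phi(z)$. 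In the second term ${\mathtt a}_w$ hits $\delta(z+w)$ only; after $z'=-w$, the adjoint property defining ${\mathtt a}^{*}$ transfers ${\mathtt a}_w$ onto $\phi(-w)$ inside $\res_w$, and $\res_w w^{-1}\delta(z+w)h(w)=h(-z)$ leaves $\big({\mathtt a}^{*}_w\phi(-w)\big)\big|_{w=-z}$; recognising $w\mapsto\phi(-w)$ as $\phi$ composed with the sign flip $\overline{(\,\cdot\,)}$ and invoking $\iota({\mathtt a})=\overline{{\mathtt a}^{*}}$ from $(\ref{iott})$, this is exactly $\iota({\mathtt a})_z\phi(z)$. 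Adding the two contributions gives $\tfrac12\big({\mathtt a}_z-\iota({\mathtt a}_z)\big)\phi(z)$, as required.

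I expect the second term of the last paragraph to be the delicate point: the adjoint ${\mathtt a}\mapsto{\mathtt a}^{*}$ produced by the "integration by parts" under $\res_w$ has to be combined with the two sign flips $z\mapsto -z$ — one from the specialization $z'=-w$, one from the argument of $\phi(-w)$ — to reassemble precisely the anti-involution $\iota=\overline{(\,\cdot\,)^{*}}$, and keeping track of all signs and of the order of operations is essentially the whole content. It is also important (both in Definition~\ref{def1} and in the computation above) that ${\mathtt a}_w$ acts before $z=-w$, so that the $w$ occurring as the argument of $\phi(w)$ and the $w$ substituted for $z$ stay genuinely distinct until the final step; conflating them is the easiest way to lose the $\iota$.
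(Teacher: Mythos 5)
Your proof is correct, and its core — commuting the field through the quadratic via the anticommutator, then applying $\tfrac12\res_w w^{-1}{\mathtt a}_w(\,\cdot\,)$ before the specialization and using the adjoint together with the two sign flips to reassemble $\iota=\overline{(\,\cdot\,)^{*}}$ — is exactly the paper's argument. The only difference is that the paper stays on the bosonic side throughout: it uses (\ref{Vac}) to get $[\widehat{V}_B(z),\widehat{Y}_B(v,w)]=\delta(z+v)\widehat{V}_B(w)-\delta(z+w)\widehat{V}_B(v)$ and then evaluates directly, so your (correct) detour through the boson--fermion correspondence and the Clifford algebra is unnecessary, the relations (\ref{Vac}) and (\ref{phid}) being formally identical.
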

\begin{proof}
From the anticommutation relation (\ref{Vac}) we have
\be
\left[\widehat{V}_B(z),\widehat{Y}_B(v,w)\right]=\delta(z+v)\widehat{V}_B(w)-\delta(z+w)\widehat{V}_B(v).
\ee
Hence,
\begin{equation}
\begin{split}
\left[ \widehat{V}_B(z),\widehat{W}_ {\mathtt a} ^B\right]&= \frac{1}{2}\res_w w^{-1} \left. {\mathtt a} _w \left(\delta(z+v)\widehat{V}_B(w)-\delta(z+w)\widehat{V}_B(v)\right) \right|_{v=-w} \\
&= \frac{1}{2}\res_w w^{-1} \left( \delta(z-w)  {\mathtt a} _w \widehat{V}_B(w) -  \delta(z+w)  {\mathtt a} _w^* \widehat{V}_B(-w)\right),
\end{split}
\end{equation}
and the statement of the lemma follows from (\ref{iott}).
\end{proof}

\begin{corollary}\label{cor1}
For $ {\mathtt a}  \in w_{1+\infty}^B$
\be
\left[\widehat{V}_B(z),\widehat{W}_ {\mathtt a} ^B\right]= {\mathtt a}_z \cdot \widehat{V}_B(z).
\ee
\end{corollary}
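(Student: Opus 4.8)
The plan is to obtain Corollary~\ref{cor1} as an immediate specialization of the preceding Lemma. The Lemma gives, for every ${\mathtt a}\in w_{1+\infty}$, the commutator $[\widehat{V}_B(z),\widehat{W}_{\mathtt a}^B]=\frac12({\mathtt a}_z-\iota({\mathtt a}_z))\cdot\widehat{V}_B(z)$. The subalgebra $w_{1+\infty}^B$ is defined precisely as the $(-1)$-eigenspace of the anti-involution $\iota$, so for ${\mathtt a}\in w_{1+\infty}^B$ one has $\iota({\mathtt a})=-{\mathtt a}$ and hence $\frac12({\mathtt a}_z-\iota({\mathtt a}_z))=\frac12\bigl({\mathtt a}_z-(-{\mathtt a}_z)\bigr)={\mathtt a}_z$. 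Substituting this into the identity of the Lemma yields the claimed formula.

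The only point that deserves a moment's care is the identification of $\iota$, which was introduced abstractly on $w_{1+\infty}$ through its action on the generators $z$ and $z\p_z$, with an operation on ${\mathtt a}_z$, the realization of ${\mathtt a}$ as a differential operator in the working variable $z$ used in Definition~\ref{def1}. Since $w_{1+\infty}$ is literally the algebra of such operators in one variable, relabelling that variable as $z$ changes nothing: the statement $\iota({\mathtt a})=-{\mathtt a}$ in $w_{1+\infty}$ is the same as $\iota({\mathtt a}_z)=-{\mathtt a}_z$. So beyond this bookkeeping no computation is required, and there is no real obstacle here — the substantive work was already carried out in the Lemma (and, before it, in the anticommutation relation~\eqref{Vac} and the residue manipulation defining $\widehat{W}_{\mathtt a}^B$). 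One may additionally remark that this is exactly why $w_{1+\infty}^B$ is the relevant subalgebra: it is the largest subspace on which $\widehat{W}_{\mathtt a}^B$ acts on $\widehat{V}_B(z)$ by the undressed operator ${\mathtt a}_z$, which is the property that makes these operators generate symmetries of the BKP hierarchy.
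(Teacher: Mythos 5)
Your proof is correct and is exactly the argument the paper intends: the corollary is an immediate specialization of the Lemma obtained by substituting $\iota({\mathtt a})=-{\mathtt a}$, which is the defining property of $w_{1+\infty}^B$. The paper leaves this step implicit, so there is nothing further to add.
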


From the commutation relation (\ref{Vcommun}) and definition of the regularized vertex operator (\ref{Ydd}) we have
\begin{multline}\label{commm}
\left[\widehat{Y}_B(z_1,w_1),\widehat{Y}_B(z_2,w_2)\right]= \delta(w_1+z_2)\widehat{Y}_B(z_1,w_2)\\
-\delta(z_1+z_2)\widehat{Y}_B(w_1,w_2)+\delta(w_1+w_2)\widehat{Y}_B(z_2,z_1)-\delta(z_1+w_2)\widehat{Y}_B(z_2,w_1)\\
+\frac{1}{4}\left(\frac{w_1-z_2}{\underline{w_1}+z_2}\frac{z_1-w_2}{\underline{z_1}+w_2}- \frac{z_1-z_2}{\underline{z_1}+z_2}\frac{w_1-w_2}{\underline{w_1}+w_2}+\frac{w_2-w_1}{\underline{w_2}+w_1}\frac{z_2-z_1}{\underline{z_2}+z_1}-\frac{w_2-z_1}{\underline{w_2}+z_1}\frac{z_2-w_1}{\underline{z_2}+w_1}\right).
\end{multline}
Here we use (\ref{delta1}).

Let
\be
\mu({\mathtt a},{\mathtt b})=\frac{1}{8}\res_{w_1} \res_{w_2} w_1^{-1} w_2^{-1} \frac{w_1-w_2}{\underline{w_1}+w_2}\left({\mathtt a}_{w_2}{\mathtt b}_{w_1}-{\mathtt a}_{w_1}{\mathtt b}_{w_2}\right) \frac{w_1-w_2}{\underline{w_1}+w_2}.
\ee
Then using (\ref{commm}), from a direct computation we see that
operators $\widehat{W}_ {\mathtt a} ^B$ describe algebra $W_{1+\infty}^B$, a central extension of the algebra $w_{1+\infty}^B$:
\begin{lemma}\label{Wcomm}
For $ {\mathtt a} , {\mathtt b} \in w_{1+\infty}^B$
\be
\left[\widehat{W}_ {\mathtt a} ^B,\widehat{W}_ {\mathtt b} ^B\right]=\widehat{W}^B_{[ {\mathtt a} , {\mathtt b} ]}+\mu( {\mathtt a} , {\mathtt b}).
\ee
\end{lemma}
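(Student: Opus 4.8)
The plan is to reduce the claim to the commutation relation \eqref{commm} for the regularised vertex operators. Inserting Definition~\ref{def1} into the commutator and using that $\mathtt{a}_{w_1}$ and $\mathtt{b}_{w_2}$ are differential operators in the two independent formal variables $w_1,w_2$ — so they move freely through the operator product and through each other's factor — one obtains
\[
\bigl[\widehat{W}_{\mathtt a}^B,\widehat{W}_{\mathtt b}^B\bigr]
=\frac14\,\res_{w_1}\res_{w_2}\,w_1^{-1}w_2^{-1}\,\mathtt{a}_{w_1}\mathtt{b}_{w_2}\,
\bigl[\widehat{Y}_B(z_1,w_1),\widehat{Y}_B(z_2,w_2)\bigr]\Big|_{z_1=-w_1,\,z_2=-w_2}.
\]
Substituting \eqref{commm} splits the right-hand side into a sum of four $\delta$-function terms and one $c$-number term, which I would treat separately.

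For the four $\delta$-function terms, in each summand the variable $w_1$ sits either inside a $\delta$-function or inside one slot of a $\widehat{Y}_B$, and similarly for $w_2$. I would (i) push the differential operators off the $\delta$-functions by the adjoint identity $\res_w(w^{-1}f\,\mathtt{a}\,g)=\res_w(w^{-1}g\,\mathtt{a}^*f)$; (ii) perform the residue that each $\delta$-function localises, using $\res_{w_1}\bigl(w_1^{-1}\delta(w_1\mp w_2)\,h(w_1)\bigr)=h(\pm w_2)$; and (iii) reorganise using the antisymmetry \eqref{asim} of $\widehat{Y}_B$, the sign-reversal rule $\mathtt{c}_w[f(-w)]=(\overline{\mathtt{c}}\,f)(-w)$, and the relation $\iota(\mathtt{a})=\overline{\mathtt{a}^*}$ of \eqref{iott}, which for $\mathtt{a}\in w_{1+\infty}^B$ (where $\iota(\mathtt{a})=-\mathtt{a}$) reads $\mathtt{a}^*=-\overline{\mathtt{a}}$. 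These last identities are precisely what converts ``$\mathtt{a}$ acting on the first slot of $\widehat{Y}_B$'' into ``$\mathtt{a}$ acting on the second slot''. Collecting the four contributions, they reassemble into $\tfrac12\res_w w^{-1}(\mathtt{a}\mathtt{b}-\mathtt{b}\mathtt{a})_w\,\widehat{Y}_B(z,w)\big|_{z=-w}=\widehat{W}^B_{[\mathtt{a},\mathtt{b}]}$; here $[\mathtt{a},\mathtt{b}]\in w_{1+\infty}^B$ because $\iota$ is an anti-involution, so $\iota([\mathtt{a},\mathtt{b}])=[\iota(\mathtt{b}),\iota(\mathtt{a})]=[\mathtt{b},\mathtt{a}]=-[\mathtt{a},\mathtt{b}]$, which makes $\widehat{W}^B_{[\mathtt{a},\mathtt{b}]}$ well defined.

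For the $c$-number term I would apply $\mathtt{a}_{w_1}\mathtt{b}_{w_2}$ to the four rational expressions in the last line of \eqref{commm}, set $z_i=-w_i$ — using elementary simplifications such as $\tfrac{z-w}{\underline z+w}\big|_{z=-w_1,\,w=-w_2}=\tfrac{w_1-w_2}{\underline{w_1}+w_2}$ obtained from the expansions of $\tfrac{1}{\underline z\pm w}$, together with the invariance of $\tfrac{w_1-w_2}{\underline{w_1}+w_2}$ under the simultaneous flip $w_1\mapsto-w_1$, $w_2\mapsto-w_2$ — and then take the double residue. After relabelling the dummy variables $w_1\leftrightarrow w_2$ in two of the four terms and integrating by parts once more (converting an $\mathtt{a}^*$ or $\mathtt{b}^*$ back to $-\overline{\mathtt{a}}$ or $-\overline{\mathtt{b}}$), the terms pair up: two of them become $\tfrac1{16}\res_{w_1}\res_{w_2}w_1^{-1}w_2^{-1}\tfrac{w_1-w_2}{\underline{w_1}+w_2}\,\mathtt{a}_{w_2}\mathtt{b}_{w_1}\,\tfrac{w_1-w_2}{\underline{w_1}+w_2}$ each, and the other two become $-\tfrac1{16}\res_{w_1}\res_{w_2}w_1^{-1}w_2^{-1}\tfrac{w_1-w_2}{\underline{w_1}+w_2}\,\mathtt{a}_{w_1}\mathtt{b}_{w_2}\,\tfrac{w_1-w_2}{\underline{w_1}+w_2}$ each; their sum is exactly $\mu(\mathtt{a},\mathtt{b})$ as defined.

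The computation is mechanical, so the real work is bookkeeping, and this is where I expect the only genuine difficulty. One must keep straight, in all eight terms: which of $w_1,w_2$ each differential operator acts on; the order in which one applies $\mathtt{a},\mathtt{b}$ versus imposes $z_i=-w_i$; the distinct expansion regions hidden in the underline notation, which differ from term to term and must be reconciled via \eqref{delta1}; and the precise signs generated by the adjoint and sign-reversal identities, which decide whether the four $\delta$-terms organise into $[\mathtt{a},\mathtt{b}]$ or into $[\mathtt{b},\mathtt{a}]$. A convenient cross-check is the fermionic picture: $\widehat{W}_{\mathtt a}^B=\sigma\,W_{\mathtt a}\,\sigma^{-1}$ with $W_{\mathtt a}=\tfrac12\res_w w^{-1}\mathtt{a}_w\Normord{\phi(z)\phi(w)}\big|_{z=-w}$ a normal-ordered quadratic neutral-fermion operator, so both the bracket term and the explicit cocycle $\mu$ must match the central extension of $\go(\infty)$ recorded in Section~\ref{SS2}.
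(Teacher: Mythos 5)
Your proposal is correct and follows essentially the same route as the paper, which itself only states that the lemma follows ``from a direct computation'' using the commutator (\ref{commm}) of the regularized vertex operators; your plan is precisely that computation, with the necessary ingredients (the adjoint identity, the relation $\mathtt{a}^*=-\overline{\mathtt{a}}$ for $\mathtt{a}\in w_{1+\infty}^B$, the antisymmetry of $\widehat{Y}_B$, and the dummy-variable relabelling needed to reconcile the expansion regions in the central term) correctly identified.
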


From (\ref{ara}) we have
\be\label{Wferm1}
\widehat{W}_{\mathtt a}^B\cdot  \lvac e^{J_+({\bf t}) }=\lvac e^{J_+({\bf t}) } W_{\mathtt a}^B,
\ee
where
\be\label{Wferm}
W_{\mathtt a}^B=\frac{1}{2} \res_w w^{-1} \Normord{\phi(-w){\mathtt a}_w \phi(w)}
\ee
is the bilinear fermionic operator acting on the fermionic Fock space.

\subsection{Bases in \texorpdfstring{$w_{1+\infty}^B$}n-algebra and its central extension}

Let us consider the operators
\be
{\mathtt w}_{k,m}:=-z^{k+m}(\p_z)^m+(-1)^{k+m}z (\p_z)^m z^{k+m-1}
\ee
for $k \in {\mathbb Z}$, $m\in {\mathbb Z}_{\geq 0}$. It is easy to see that ${\mathtt w}_{k,m}\in w_{1+\infty}^B$. Operators with 
$k+m \in {\mathbb Z}_{\odd}$ constitute a basis in $w^B_{1+\infty}$, for example
\begin{equation}\label{wmin}
 \begin{aligned}
{\mathtt w}_{k,0}&=-2 z^k,                                                                                 &k &\in{\mathbb Z}_{\odd}, \\
{\mathtt w}_{k,1}&=-2z^{k+1}\p_z-k z^k,                                                           &k  & \in{\mathbb Z}_{\even},\\
{\mathtt w}_{k,2}&=-2z^{k+2}\p_z^2-2(k+1)z^{k+1}\p_z-(k+1)kz^k, \,\,\,\,\,\,\,\, &k &\in{\mathbb Z}_{\even}.\\
\end{aligned}
\end{equation}

Associated basis of $W_{1+\infty}^B$ consists of operators
\be
\widehat{W}_{k,m}^B:=\widehat{W}_{{\mathtt w}_{k,m}}^B.
\ee
for $k+m \in {\mathbb Z}_{\odd}$, and a central element.

Let
\be
\widehat{J}_B(z)=\sum_{k\in{\mathbb Z}_{\odd}^+}\left( k t_k z^{k-1}+2 z^{-k-1} \frac{\p}{\p t_k}\right) .
\ee
We introduce the {\em bosonic normal ordering} $\normord \dots \normord $ which puts all $t_k$ to the left of all $\frac{\p}{\p t_k}$.
Consider the operators 
\be
U_k(z)=\normord P_k(\widehat{J}_B(z)) \normord. 
\ee
Here $P_m$ are the Fa\`a di Bruno differential polynomials:
\be
P_k(\varphi') = e^{-\varphi(z)}\p_z^k e^{\varphi(z)} =(\p_z +\varphi')^k,
\ee
where $\varphi'=\p_z \varphi(z)$. They have a simple expression in terms of elementary Schur functions $p_k$,
\be
P_k( \varphi')=k!p_k( \varphi^{(j)}(z)/j!),
\ee
because
\be
\sum_{k=0}^\infty \frac{s^k}{k!} P_k(\p_z \varphi)= e^{-\varphi(z)} e^{\varphi(z+s)}=e^{\sum_{j=1}^\infty \frac{s^j}{j!} \varphi^{(j)}(z)}=\sum_{k=0}^\infty s^k p_k( \varphi^{(j)}(z)/j!).
\ee
Let
\be
\widehat{\varphi}_B(z)= \sum_{k\in{\mathbb Z}_{\odd}^+}\left( t_k z^k -2 \frac{1}{k z^k} \frac{\p}{\p t_k}\right),
\ee
then $\widehat{J}_B(z)=\p_z \widehat{\varphi}_B(z)$ and the vertex operator (\ref{Vreg}) can be expressed as
\be
{\widehat Y}_B(z,w)=\frac{1}{2}\frac{z-w}{\underline{z}+w} \normord e^{\widehat{\varphi}_B(z)+\widehat{\varphi}_B(w)}-1 \normord.
\ee

The following lemma describes operators $\widehat{W}_{k,m}^B$ in terms of the  Fa\`a di Bruno polynomials 
\begin{lemma} 
For $k \in {\mathbb Z}$, $m\in {\mathbb Z}_{\geq 0}$
\be
\widehat{W}_{k,m}^B=\frac{1}{2}\res_w w^{k+m-1} \left(\frac{2w}{m+1} U_{m+1}(w)+U_{m}(w)\right),
\ee
where we put $U_0(w)=0$.
\end{lemma}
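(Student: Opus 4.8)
The plan is to expand the regularized vertex operator $\widehat{Y}_B(z,w)$ around the diagonal $z=-w$ in powers of $z+w$, with coefficients written through the operators $U_l$, then to apply ${\mathtt w}_{k,m}$ in the variable $w$ and extract the residue. The computational heart is the expansion. Since the times $t_k$ all carry odd indices, $\widehat{\varphi}_B(z)$ is odd in $z$; hence $\widehat{\varphi}_B(-w)+\widehat{\varphi}_B(w)=0$, and the Taylor expansion of $\widehat{\varphi}_B(z)$ about $z=-w$ reads $\widehat{\varphi}_B(z)+\widehat{\varphi}_B(w)=\sum_{j\geq 1}\tfrac{(z+w)^j}{j!}\widehat{\varphi}_B^{(j)}(-w)$. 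Inside the bosonic normal ordering the non-commuting coefficients $\widehat{\varphi}_B^{(j)}(-w)$ may be treated as commuting symbols, so the Fa\`a di Bruno generating identity of this section gives
\[
\normord e^{\widehat{\varphi}_B(z)+\widehat{\varphi}_B(w)}\normord=\sum_{l\geq 0}\frac{(z+w)^l}{l!}\,U_l(-w),\qquad U_0=1 .
\]
Feeding this into $\widehat{Y}_B(z,w)=\tfrac12\tfrac{z-w}{\underline{z}+w}\normord e^{\widehat{\varphi}_B(z)+\widehat{\varphi}_B(w)}-1\normord$ (the $l=0$ term cancels the $-1$, so the factor $\tfrac{z-w}{z+w}$ acts on a series that begins at $(z+w)^1$) and writing $z-w=(z+w)-2w$ gives $\widehat{Y}_B(z,w)=\sum_{l\geq 0}(z+w)^l c_l(w)$, with $c_0(w)=-w\,U_1(-w)$ and $c_l(w)=\tfrac1{2\,l!}U_l(-w)-\tfrac{w}{(l+1)!}U_{l+1}(-w)$ for $l\geq 1$; in particular $\widehat{Y}_B$ is regular at $z=-w$. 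Replacing $w$ by $-w$ then gives $c_m(-w)=\tfrac1{2\,m!}U_m(w)+\tfrac{w}{(m+1)!}U_{m+1}(w)$, the first term being absent when $m=0$.

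Next I apply ${\mathtt w}_{k,m}=-z^{k+m}\p_z^m+(-1)^{k+m}z\p_z^m z^{k+m-1}$ in the $w$-variable and split $\widehat{W}_{k,m}^B=\tfrac12\res_w w^{-1}\bigl[({\mathtt w}_{k,m})_w\widehat{Y}_B(z,w)\bigr]_{z=-w}=\tfrac12(I_1+I_2)$ according to the two summands. For $I_1=-\res_w w^{k+m-1}\bigl[\p_w^m\widehat{Y}_B(z,w)\bigr]_{z=-w}$, the antisymmetry $\widehat{Y}_B(z,w)=-\widehat{Y}_B(w,z)$ together with the expansion above, applied in the first slot, $\widehat{Y}_B(w,z)=\sum_l(w+z)^l c_l(z)$, shows $\bigl[\p_w^m\widehat{Y}_B(z,w)\bigr]_{z=-w}=-m!\,c_m(-w)$, hence $I_1=m!\res_w w^{k+m-1}c_m(-w)$. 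For $I_2$, after the outer $w$ has cancelled $w^{-1}$, I use the elementary identity $\res_w\bigl[\p_w g(z,w)\bigr]_{z=-w}=\res_w\bigl[\p_z g(z,w)\bigr]_{z=-w}$ --- valid since the difference of the two sides is $\tfrac{d}{dw}g(-w,w)$, whose residue vanishes --- applied $m$ times, to trade $\p_w^m$ for $\p_z^m$ acting on $w^{k+m-1}\widehat{Y}_B$; then $\bigl[\p_z^m\widehat{Y}_B(z,w)\bigr]_{z=-w}=m!\,c_m(w)$ gives $I_2=(-1)^{k+m}m!\res_w w^{k+m-1}c_m(w)$.

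Finally, a residue changes sign under $w\mapsto -w$, so $\res_w w^{k+m-1}c_m(-w)=(-1)^{k+m}\res_w w^{k+m-1}c_m(w)$, whence $I_1=I_2$ and $\widehat{W}_{k,m}^B=I_1=m!\res_w w^{k+m-1}c_m(-w)$. Substituting the value of $c_m(-w)$ found above,
\[
\widehat{W}_{k,m}^B=\tfrac12\res_w w^{k+m-1}U_m(w)+\tfrac1{m+1}\res_w w^{k+m}U_{m+1}(w),
\]
with the first term omitted at $m=0$; this is exactly the claimed identity once $U_0(w)$ is set to $0$.

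The main obstacle is the first step: one has to combine the oddness of $\widehat{\varphi}_B$ --- which makes $z=-w$ the natural expansion point and guarantees that $\widehat{Y}_B$ is regular there --- with the fact that the bosonic normal ordering turns the Taylor coefficients into effectively commuting symbols, so that the Fa\`a di Bruno identity can be invoked verbatim. After that the remainder is residue bookkeeping, the $\p_w\leftrightarrow\p_z$ exchange, and keeping track of parity signs.
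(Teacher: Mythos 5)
Your proof is correct and follows essentially the same route as the paper: both rest on the Fa\`a di Bruno expansion of $\widehat{Y}_B$ near the diagonal $z=-w$ in terms of the $U_l$, the antisymmetry of $\widehat{Y}_B$, and the residue identities that trade $\p_w$ for $\p_z$ and account for parity. The only (cosmetic) difference is that the paper first merges the two summands of ${\mathtt w}_{k,m}$ into $-\res_w w^{k+m-1}\p_w^m\widehat{Y}_B|_{z=-w}$ and then performs a single $\Delta$-expansion, whereas you expand first and verify $I_1=I_2$ afterwards.
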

\begin{proof}
By definition
\begin{equation}
\begin{split}
\widehat{W}_{k,m}^B&=\frac{1}{2}\res_w w^{-1} \left.\left((-1)^{k+m}w \p_w^m w^{k+m-1}-w^{k+m} \p_w^m\right) {\widehat Y}_B(z,w)\right|_{z=-w} \\
&=\frac{1}{2}\res_w w^{-1} \left.\left(z^{k+m} \p_z^m-w^{k+m} \p_w^m\right) {\widehat Y}_B(z,w)\right|_{z=-w}.
\end{split}
\end{equation}
It is easy to see that for any series $f(z,w)\in {\mathbb C}[\![z,w]\!]$ such that $f$ is antisymmetric, $f(z,w)=-f(w,z)$, one has
\be
\res_w w^{-1} \left.z^k \p_z^m f(z,w)\right|_{z=-w}=-\res_w w^{-1} \left.w^k \p_w ^m f(z,w)\right|_{z=-w}.
\ee
Since ${\widehat Y}_B(z,w)$ is antisymmetric, (\ref{asim}), we conclude that
\be
\widehat{W}_{k,m}^B=-\res_w  \left. w^{k+m-1}(\p_w)^m {\widehat Y}_B(z,w)\right|_{z=-w}.
\ee 
Then
\begin{equation}
\begin{split}
 \left. w^k \p_w^m {\widehat Y}_B(z,w)\right|_{z=-w}&= \left.  w^k \p_\Delta^m  {\widehat Y}_B(-w,w+\Delta)\right|_{\Delta=0}\\
 &= \left. -\frac{1}{2} w^k \p_\Delta^m (2w+\Delta) \sum_{j=1}^\infty \frac{\Delta^{j-1}}{j!} U_j(w) \right|_{\Delta=0}\\
 &=-\frac{1}{2}w^k \left(\frac{2w}{m+1} U_{m+1}(w)+U_{m}(w)\right),
\end{split}
\end{equation}
where for $m=0$ we assume $U_0(w)=0$.
This concludes the proof.
\end{proof}

The {\em Heisenberg-Virasoro subalgebra} of the algebra $W_{1+\infty}^B$ is generated by the operators 
\be\label{JB}
\widehat{J}_k^B =
\begin{cases}
 \begin{aligned}
 2 \frac{\p}{\p t_k} \,\,\,\,\,\,\,\,\,\,\,\,\,\,\,\,\,\,\, &\mathrm{for} \quad k \in {\mathbb Z}_{\odd}^+,\\[5pt]
-kt_{-k}\,\,\,\,\,\,\,\,\,\,\,\,\,\,\,&\mathrm{for} \quad -k\in {\mathbb Z}_{\odd}^+,\\[5pt]
0 \,\,\,\,\,\,\,\,\,\,\,\,\,\,\,\,\,\,\,\,\,\,\,\,\,\,\,\,\,&\mathrm{overwise} 
 \end{aligned}
\end{cases}
\ee
and the Virasoro operators
\be
\widehat{L}_k^B=
\begin{cases}
 \begin{aligned}
&\frac{1}{2}\sum_{i+j=k} \normord\widehat{J}_i^B\widehat{J}_j^B\normord,  \quad  \quad &\mathrm{for} &\quad  k \in  {\mathbb Z}_{\even},\\[6pt]
&0,  \quad  \quad  \quad  \quad  \quad  \quad \quad \quad  \quad \,\,\,\,\,& \mathrm{for}&  \quad  k \in  {\mathbb Z}_{\odd}.
 \end{aligned}
\end{cases}
\ee
Here the bosonic normal ordering puts all $\widehat{J}_m^B$ with positive $m$ to the right of all $\widehat{J}_m^B$ with
negative $m$.
Let us also consider the $W^{(3)}$-algebra, which includes the generators
\be\label{MB}
\widehat{M}_k^B=
\begin{cases}
 \begin{aligned}
&\frac{1}{3}\sum_{i+j+l=k}  \normord\widehat{J}_i^B\widehat{J}_j^B\widehat{J}_l^B \normord,  \quad  \quad &\mathrm{for}& \quad  k \in  {\mathbb Z}_{\odd},\\[6pt]
&0,  \quad  \quad  \quad  \quad  \quad  \quad \quad \quad  \quad \quad \quad \quad &\mathrm{for} & \quad  k \in  {\mathbb Z}_{\even}.
 \end{aligned}
\end{cases}
\ee
These operators satisfy the following commutation relations
\begin{equation}
 \begin{aligned}
\left[\widehat{J}_k^B,\widehat{J}_m^B\right]&=2\delta_{k+m,0} k,                                                                                         & k&,m \in{\mathbb Z}_{\odd}, \\
\left[\widehat{J}_k^B,\widehat{L}_m^B\right]&=2 k\widehat{J}_{k+m}^B,                                                                              & k&  \in{\mathbb Z}_{\odd}, m  \in{\mathbb Z}_{\even},\\
\left[\widehat{L}_k^B,\widehat{L}_m^B\right]&=2(k-m)\widehat{L}_{k+m}^B+\frac{1}{3}k(k^2-1)\delta_{k,-m},                     & k&,m \in{\mathbb Z}_{\even},\\
\left[\widehat{J}_k^B,\widehat{M}_m^B\right]&=4k\,\widehat{L}_{k+m}^B.                                                                            &  k&,m \in{\mathbb Z}_{\odd},\\
\left[\widehat{L}_k^B,\widehat{M}_m^B\right]&=2(2k-m)\widehat{M}_{k+m}+\frac{2}{3}k(k^2-1)\widehat{J}_{k+m}, \,\,\,\,\, & k&  \in{\mathbb Z}_{\even}, m  \in{\mathbb Z}_{\odd},\\
\end{aligned}
\end{equation}
and, except for the case of two operators $\widehat{M}_m^B$, these operators commute otherwise.
A commutator of two $\widehat{M}_k^B$'s contains the terms of fourth power of the bosonic operators $\widehat{J}_m^B$, so it can not be represented as a linear combination of $\widehat{J}_k^B$, $\widehat{L}_k^B$, and $\widehat{M}_k^B$.

For operators $\widehat{W}_{k,m}^B$ with $m\leq 2$ we have
\begin{equation}
\begin{split}\label{Wop}
\widehat{W}_{k,0}^B&=  \widehat{J}_k^B,\\
\widehat{W}_{k,1}^B&=  \widehat{L}_k^B-\frac{k}{2} \widehat{J}_k^B,\\
\widehat{W}_{k,2}^B&= \widehat{M}_k^B-(k+1) \widehat{L}_k^B+\frac{(k+1)(2k+1)}{6}  \widehat{J}_k^B.
\end{split}
\end{equation}

\section{Orthogonal Sato Grassmannian}\label{S4}

In this section we briefly summarize some properties of the Sato Grassmannian. It describes the space of solutions of the KP hierarchy \cite{Sato} and its BKP version. We describe the action of the algebra $w_{1+\infty}^B$ on the BKP Sato Grassmannian and clarify the difference between the points of the KP and BKP Sato Grassmannians for the solutions of the KdV hierarchy.

 \subsection{Symmetries of the BKP Sato Grassmannian}
 Let us consider the space $H=H_+\oplus H_-$, where the subspaces 
\be
H_-=z^{-1}{\mathbb C}[\![z^{-1}]\!]
\ee and 
\be
H_+={\mathbb C}[z]
\ee 
are generated by negative and nonnegative powers of $z$ respectively. Then the Sato Grassmannian $\rm{Gr}$ consists of all closed linear spaces $\mathcal{W}\in H $, which are compatible with $H_+$. Namely, an orthogonal projection $\pi_+ : \mathcal{W} \to H_+ $ should be a Fredholm operator, i.e. both the kernel ${\rm ker}\, \pi_+ \in \mathcal{W}$ and the 
cokernel ${\rm coker}\, \pi_+ \in H_+$ should be finite-dimensional vector spaces. 
The Grassmannian $\Gr$ consists of components $\Gr^{(k)}$, parametrized by an index of the operator $\pi_+$. Below we use only the component $\Gr^{(0)}$; other components have an equivalent description. The big cell $\Gr^{(0)}_+$ of $\Gr^{(0)}$  is defined by the constraint ${\rm ker}\, \pi_+ = {\rm coker}\, \pi_+=0$. We call $\Gr^{(0)}_+$ the {\em Sato Grassmannian} for simplicity.

A point of the Sato Grassmannian ${\mathcal W}\in \rm{Gr}^{(0)}_+$ can be described by an  {\emph {admissible basis}} 
\be
{\mathcal W}=\sppan_{\mathbb C}\{\Phi_1^{\mathcal W},\Phi_2^{\mathcal W},\Phi_3^{\mathcal W},\dots\}.
\ee
Let 
\be
\Psi={\mathcal W}\cap 1+z^{-1}{\mathbb C}[\![z^{-1}]\!].
\ee
be the {\em wave function}.

The BKP hierarchy can be represented in terms of tau-function $\tau({\bf t})$ by the Hirota bilinear identity
\be\label{HBE}
\res_z z^{-1} e^{\sum_{k\in{\mathbb Z}_{\odd}^+}z^k (t_k-t'_k)} 
\tau ({\bf t}-2[z^{-1}])\tau ({\bf t'}+2[z^{-1}]) =\tau({\bf t})\tau({\bf t'}).
\ee
Here ${\bf t}$ and ${\bf t'}$ are two independent sets of variables.
For
$f(z) \in H$ and $g(z) \in H$ put
\be
(f(z),g(z))_B := \res_z z^{-1} f (z)g (-z).
\ee
Note that $(\ ,\ )_B$ is a non-degenerate symmetric bilinear pairing on
$H$. For ${\mathcal W}\in \Gr^{(0)}$ we denote by ${\mathcal W}^{\bot_B}$ the orthogonal compliment of ${\mathcal W}$ with respect to $(\ ,\ )_B$.
The BKP (or orthogonal) Sato Grassmannian $\Gr^{(0)}_B$ is the subspace of $\Gr^{(0)}$ such that ${\mathcal W}^{\bot_B} \subset {\mathcal W}$ and the quotient $ {\mathcal W}/{\mathcal W}^{\bot_B}$ is one dimensional and is generated by $\Psi$. Similarly to the KP case, one can consider the big cell of the BKP Sato Grassmannian, $\Gr^{(0)}_{B+} = \Gr^{(0)}_{+} \cap \Gr^{(0)}_{B}$.
There exists a bijection between the points of the BKP Sato Grassmannian ${\mathcal W}\in\Gr^{(0)}_{B+}$ and the BKP tau-functions with $\tau({\bf 0})=1$.  
Below we focus on $\Gr^{(0)}_{B+}$, and put $\tau({\bf 0})=1$ for simplicity.

Tau-functions of the BKP hierarchy are given by the vacuum expectation values of neutral fermions
\be\label{taug}
\tau_{G}({\bf t})=\lvac e^{J_+({\bf t})}  G \rvac.
\ee
They are labeled by the group element of the centrally extended $\go(\infty)$ algebra,
\be
G=\exp\left(\sum_{k,m\in {\mathbb Z}} a_{km}\Normord{\phi_k \phi_m}\right).
\ee
The wave function is equal to the principal specialization of the BKP tau-function
\be
\Psi(z)=\tau(-2[z^{-1}])=\tau(2[-z^{-1}]),
\ee
because the BKP tau-functions depend only on odd times. The {\em Baker--Akhiezer function}
\be
\Psi^B(z,{\bf t})= e^{\sum_{k\in{\mathbb Z}_{\odd}^+}t_k z^k} \frac{\tau({\bf t}-2[z^{-1}])}{\tau({\bf t})}
\ee
generates the corresponding point of the BKP Sato Grassmannian through the ${\bf t}$ series expansion 
\be
\Psi(z,{\bf t})\in{\mathcal W}.
\ee
Let us label the Baker--Akhiezer function, associated with the tau-function (\ref{taug}), by $G$. Then, from (\ref{Vtp}) we have
\be
\Psi_G^B(z,{\bf t})=\frac{2 \lvac \phi_0 e^{J_+({\bf t})} \phi(z) G \rvac}{\tau_G({\bf t})}.
\ee

For some ${\mathtt a}\in w_{1+\infty}^B$ let us consider 
\be
\tilde{G}=e^{W_{\mathtt a}^B} G,
\ee
where $W_{\mathtt a}^B$ is given by (\ref{Wferm}). We assume that ${\mathtt a}$ and $G$ are such that both $\tau_G$ and $\tau_{\tilde{G}}$ are well defined. 
Then from (\ref{Vtp}) for the associated wave function we have
\begin{equation}
\begin{split}
\Psi_{\tilde{G}}^B(z,{\bf t})&=\frac{\widehat{V}_B(z) \lvac e^{J_+({\bf t})} e^{{W}_{\mathtt a}^B}G \rvac }{\tau_{\tilde{G}}({\bf t})}\\
&=\frac{\widehat{V}_B(z) e^{\widehat{W}_{\mathtt a}^B }\lvac e^{J_+({\bf t})} G \rvac }{\tau_{\tilde{G}}({\bf t})}.
\end{split}
\end{equation}
From Corollary \ref{cor1} we have the operator identity
\be
\widehat{V}_B(z) e^{\widehat{W}_{\mathtt a}^B}= e^{\widehat{W}_{\mathtt a}^B} \left(e^{-\widehat{W}_{\mathtt a}^B} \widehat{V}_B(z) e^{\widehat{W}_{\mathtt a}^B}\right)= e^{\widehat{W}_{\mathtt a}^B} \left( e^{{\mathtt a}_z} \widehat{V}_B(z) \right),
\ee
hence
\begin{equation}
\begin{split}
\Psi^B_{\tilde{G}}(z,{\bf t})&= \frac{ e^{\widehat{W}_{\mathtt a}^B} e^{{\mathtt a}_z}\widehat{V}_B(z)   \lvac e^{J_+({\bf t})} G \rvac }{\tau_{\tilde{G}}({\bf t})}\\
&= \frac{ e^{\widehat{W}_{\mathtt a}^B}   \tau_G({\bf t}) e^{a_z}   \Psi^B_G(z,{\bf t}) }{\tau_{\tilde{G}}({\bf t})}.
\end{split}
\end{equation}
Inverting the operator $e^{\widehat{W}_{\mathtt a}^B} $ we have
\be
e^{{\mathtt a}_z}  \Psi^B_G(z,{\bf t})=  \frac{ e^{-\widehat{W}_{\mathtt a}^B}   \tau_{\tilde{G}}({\bf t})  \Psi^B_{\tilde{G}}(z,{\bf t}) }{\tau_{G}({\bf t})}.
\ee
Since $\Psi^B_{\tilde G}(z,{\bf t}) \in {\mathcal W}_{\tilde G}$, we also have
\be
  \frac{ e^{-\widehat{W}_{\mathtt a}^B}   \tau_{\tilde{G}}({\bf t})  \Psi^B_{\tilde{G}}(z,{\bf t}) }{\tau_{G}({\bf t})}
 \in {\mathcal W}_{\tilde{G}}.
 \ee
Therefore,
\be
e^{{\mathtt a}}\cdot {\mathcal W}_{G} \subset {\mathcal W}_{\tilde{G}},
\ee
and inverse is also true
\be
e^{-{\mathtt a}}\cdot {\mathcal W}_{\tilde{G}} \subset {\mathcal W}_{G}.
\ee
Then the following lemma justifies Definition \ref{def1}:
\begin{lemma}\label{L41}
For ${\mathtt a}\in w_{1+\infty}^B$ and ${\mathcal W}_{G} \in \Gr^{(0)}_{B}$
\be
e^{{\mathtt a}}\cdot {\mathcal W}_{G} = {\mathcal W}_{e^{{W}_{\mathtt a}^B}G}.
\ee
\end{lemma}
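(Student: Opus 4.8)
The plan is to notice that the computation carried out just before the statement already furnishes, with $\tilde G := e^{W_{\mathtt a}^B} G$, the two inclusions $e^{\mathtt a}\cdot {\mathcal W}_G \subset {\mathcal W}_{\tilde G}$ and $e^{-\mathtt a}\cdot {\mathcal W}_{\tilde G} \subset {\mathcal W}_G$, both deduced from Corollary \ref{cor1} via the Baker--Akhiezer representation of $\Psi_{\tilde G}^B(z,{\bf t})$ and the conjugation identity $\widehat{V}_B(z) e^{\widehat{W}_{\mathtt a}^B} = e^{\widehat{W}_{\mathtt a}^B}\bigl(e^{{\mathtt a}_z}\widehat{V}_B(z)\bigr)$. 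What the lemma adds is only that these two one-sided inclusions are converse to one another and hence collapse to an equality, together with the remark that ${\mathcal W}_{\tilde G}$ is, by the very definition of $\tilde G$, the space ${\mathcal W}_{e^{W_{\mathtt a}^B} G}$ appearing in the statement.

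First I would record that $e^{\pm {\mathtt a}}=\sum_{n\ge 0}(\pm{\mathtt a})^n/n!$ define mutually inverse linear operators on $H$: since each ${\mathtt a}\in w_{1+\infty}^B$ shifts the degree in $z$ by a fixed integer, applied to a fixed element of $H$ --- and, under the standing assumption that $\tau_G$ and $\tau_{\tilde G}$ are both well defined, to the elements of ${\mathcal W}_G$ and ${\mathcal W}_{\tilde G}$ --- the exponential series is a well-defined formal Laurent series, and $e^{\mathtt a}e^{-\mathtt a}=e^{-\mathtt a}e^{\mathtt a}=\Id$ because ${\mathtt a}$ commutes with $-{\mathtt a}$.

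Then I would apply the invertible operator $e^{\mathtt a}$ to the second inclusion to get ${\mathcal W}_{\tilde G}=e^{\mathtt a}e^{-\mathtt a}\cdot{\mathcal W}_{\tilde G}\subset e^{\mathtt a}\cdot{\mathcal W}_G$ and combine this with the first inclusion into the chain ${\mathcal W}_{\tilde G}\subset e^{\mathtt a}\cdot{\mathcal W}_G\subset{\mathcal W}_{\tilde G}$, forcing $e^{\mathtt a}\cdot{\mathcal W}_G={\mathcal W}_{\tilde G}={\mathcal W}_{e^{W_{\mathtt a}^B}G}$, which is the claim. As a by-product, the same chain shows that $e^{\mathtt a}$ sends the BKP Sato Grassmannian point ${\mathcal W}_G$ to another such point, as one expects: ${\mathtt a}\in w_{1+\infty}^B$ means $\iota({\mathtt a})=-{\mathtt a}$, i.e.\ ${\mathtt a}$ is skew for the pairing $(\ ,\ )_B$ in view of (\ref{iott}), so $e^{\mathtt a}$ preserves $(\ ,\ )_B$ and hence the defining conditions ${\mathcal W}^{\bot_B}\subset{\mathcal W}$ and $\dim {\mathcal W}/{\mathcal W}^{\bot_B}=1$.

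There is essentially no obstacle here: once the two inclusions are in hand the argument is a two-line bookkeeping step. The only points needing care are (i) that $e^{\pm{\mathtt a}}$ act as genuine, mutually inverse operators on the spaces involved rather than as merely formal expressions --- which is precisely what the hypothesis that ${\mathtt a}$ and $G$ are such that $\tau_G$ and $\tau_{\tilde G}$ are well defined is there to ensure --- and (ii) correctly recognizing the two a priori one-sided inclusions as converses of each other so that they can be spliced into an equality.
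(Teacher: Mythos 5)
Your proposal is correct and takes essentially the same route as the paper: the two one-sided inclusions $e^{{\mathtt a}}\cdot {\mathcal W}_{G} \subset {\mathcal W}_{\tilde{G}}$ and $e^{-{\mathtt a}}\cdot {\mathcal W}_{\tilde{G}} \subset {\mathcal W}_{G}$ are established in the discussion immediately preceding the lemma, and the lemma is stated as their direct consequence, which is precisely the splicing-into-an-equality step you perform. Your additional remarks on the invertibility of $e^{\pm{\mathtt a}}$ and the preservation of the pairing $(\ ,\ )_B$ are sensible elaborations of points the paper leaves implicit under its standing well-definedness assumption.
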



\subsection{KdV vs BKP}\label{kdvb}
 In \cite{AKdV} we prove that after a simple redefinition of times any tau-function of the KdV hierarchy solves the BKP hierarchy:
\begin{theorem}[\cite{AKdV}]\label{MT}
For any KdV tau-function  
\be\label{ms}
\tau_{\BKP}({\bf t})=\tau_{\KdV}({\bf t}/2)
\ee 
is a tau-function of the BKP hierarchy.
\end{theorem}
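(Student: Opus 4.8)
The plan is to verify the BKP Hirota bilinear identity~(\ref{HBE}) for $\tau_{\BKP}({\bf t})=\tau_{\KdV}({\bf t}/2)$. Write $\xi({\bf t},z):=\sum_{k\in{\mathbb Z}_{\odd}^+}t_kz^k$ and set ${\bf s}={\bf t}/2$, ${\bf s}'={\bf t}'/2$, so that $\tau_{\BKP}({\bf t}-2[z^{-1}])=\tau_{\KdV}({\bf s}-[z^{-1}])$ and the prefactor in~(\ref{HBE}) is $e^{\sum_k(t_k-t'_k)z^k}=e^{2\xi({\bf s}-{\bf s}',z)}$. Introducing the KdV Baker--Akhiezer function $w(z,{\bf s})=e^{\xi({\bf s},z)}\tau_{\KdV}({\bf s}-[z^{-1}])/\tau_{\KdV}({\bf s})$ and its adjoint $w^{*}(z,{\bf s})=e^{-\xi({\bf s},z)}\tau_{\KdV}({\bf s}+[z^{-1}])/\tau_{\KdV}({\bf s})$, the dressing exponentials combine with the prefactor and, dividing by $\tau_{\KdV}({\bf s})\tau_{\KdV}({\bf s}')$, identity~(\ref{HBE}) becomes equivalent to $\res_z z^{-1}e^{\xi({\bf s}-{\bf s}',z)}\,w(z,{\bf s})\,w^{*}(z,{\bf s}')=1$ for all odd-time sequences ${\bf s},{\bf s}'$. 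Because the KdV Lax operator $L=\p_{t_1}^{\,2}+u$ is formally self-adjoint, one has $L\,w(-z,{\bf s})=z^2\,w(-z,{\bf s})$ and $L^{*}w^{*}(z,{\bf s})=z^2\,w^{*}(z,{\bf s})$ with matching $e^{\mp\xi({\bf s},z)}$ asymptotics (here $\xi({\bf s},-z)=-\xi({\bf s},z)$ since only odd times are present), hence $w^{*}(z,{\bf s})=w(-z,{\bf s})$ and the relation to be proved is $\res_z z^{-1}e^{\xi({\bf s}-{\bf s}',z)}\,w(z,{\bf s})\,w(-z,{\bf s}')=1$. The extra $z^{-1}$ and the \emph{odd}-power exponential $e^{\xi({\bf s}-{\bf s}',z)}$ are precisely the fingerprint of passing from a KP tau-function to its ``square root'', and this is where the content of the theorem sits.

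To prove this I would pass to the representation-theoretic picture. A KdV tau-function corresponds to a point $\mathcal{W}$ of the KP Sato Grassmannian with $z^2\mathcal{W}\subseteq\mathcal{W}$; equivalently, it lies on the $\widehat{SL}_2$-orbit of the highest-weight vector in the basic representation of $\widehat{\mathfrak{sl}}_2$, in the principal realization ${\mathbb C}[t_1,t_3,t_5,\dots]$ whose Heisenberg has modes indexed by the positive odd integers. This is exactly the Heisenberg generated by the neutral-fermion currents $J_k$ of Section~\ref{SS2}; the normalization in~(\ref{comJ}) differs from the standard $\widehat{\mathfrak{sl}}_2$ one, and it is this mismatch that produces the rescaling ${\bf t}\mapsto{\bf t}/2$. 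One then uses that $\widehat{\mathfrak{sl}}_2$ embeds into the central extension of $\go(\infty)$ acting on ${\mathcal F}_B$ (the principal realization of $A_1^{(1)}$ as a subalgebra of $\go(\infty)$), that the two highest-weight vectors are identified, and that the $\widehat{SL}_2$-orbit of the vacuum is contained in the $O(\infty)$-orbit that produces the BKP tau-functions~(\ref{taug}). Consequently the group element defining $\tau_{\KdV}$ pushes forward to a group element $G$ of the centrally extended $\go(\infty)$ with $\tau_{\KdV}({\bf t}/2)=\lvac e^{J_+({\bf t})}G\rvac=\tau_G({\bf t})$, which is a BKP tau-function by construction. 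Equivalently, the Pl\"ucker (bilinear) relation satisfied by a KdV tau-function transports, under $\widehat{\mathfrak{sl}}_2\hookrightarrow\go(\infty)$, to the $\go(\infty)$ bilinear relation whose bosonization is~(\ref{HBE}).

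The main obstacle is making the embedding $\widehat{\mathfrak{sl}}_2\hookrightarrow$ (centrally extended) $\go(\infty)$ fully explicit while tracking all normalizations: one must verify that the $2$-reduction constraint $z^2\mathcal{W}\subseteq\mathcal{W}$ is exactly what places the relevant group element in the orthogonal subgroup after the half-rescaling, and that the vacuum-to-vacuum identification introduces no spurious scalar. In the purely analytic route the obstacle is equivalent: to prove $\res_z z^{-1}e^{\xi({\bf s}-{\bf s}',z)}w(z,{\bf s})w(-z,{\bf s}')=1$ directly one notes that the $2$-reduced KP bilinear identity, in the form with even-power exponentials coming from the even KP times, only controls the odd-negative Laurent coefficients of $w(z,{\bf s})w(-z,{\bf s}')$, which does not by itself give the claim; one must additionally feed in the full KdV wave-operator structure (the formal square root $P=L^{1/2}$ and the Sato equations, together with $w(-z,{\bf s})=w^{*}(z,{\bf s})$). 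The remaining ingredients --- the change of variables ${\bf t}=2{\bf s}$, the reduction of~(\ref{HBE}) to the bilinear relation above, and the Laurent-coefficient bookkeeping --- are routine.
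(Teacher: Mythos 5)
First, a remark on scope: the paper does not prove Theorem~\ref{MT} at all --- it is imported verbatim from \cite{AKdV}, so there is no internal proof to compare your argument against. Judged as a standalone proof, your proposal contains a genuine gap. The first paragraph is fine and is a correct reduction: with ${\bf s}={\bf t}/2$ the shifts match ($\tau_{\KdV}({\bf s}\mp[z^{-1}])=\tau_{\BKP}({\bf t}\mp 2[z^{-1}])$ because only odd times survive), the identity $w^{*}(z,{\bf s})=w(-z,{\bf s})$ is legitimate for the self-adjoint KdV Lax operator, and (\ref{HBE}) does become $\res_z z^{-1}e^{\xi({\bf s}-{\bf s}',z)}\,w(z,{\bf s})\,w(-z,{\bf s}')=1$. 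You also correctly observe that this does \emph{not} follow from the $2$-reduced KP bilinear identities, which only give $\res_z z^{2n}w(z,{\bf s})w(-z,{\bf s}')=0$ for $n\ge 0$: the doubled exponential and the $z^{-1}$ insertion are new. But having isolated the actual content of the theorem, you do not prove it. The residue identity is left unestablished, and the representation-theoretic substitute --- ``the $\widehat{SL}_2$-orbit of the highest-weight vector in the principal realization is contained in the $O(\infty)$-orbit producing the BKP tau-functions (\ref{taug})'' --- is not a known fact you may quote; it is precisely the theorem restated in orbit language. Asserting it, and then listing the verification of the embedding, the normalizations, the placement of the $2$-reduced group element in the orthogonal subgroup, and the absence of spurious scalars as ``the main obstacle,'' amounts to deferring the entire proof.

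Concretely, to close the gap by your second route you would need to (i) exhibit an explicit Lie-algebra embedding of $\widehat{\mathfrak{sl}}_2$ (with its central extension) into the centrally extended $\go(\infty)$ of Section~\ref{SS2}, compatible with the principal Heisenberg on one side and the operators $J_k$ of (\ref{comJ}) on the other, including the factor of $2$ that produces ${\bf t}\mapsto{\bf t}/2$; (ii) show that the corresponding group orbits match, i.e.\ that every KdV tau-function is of the form $\lvac e^{J_+({\bf t})}G\rvac$ for some $G$ in the orthogonal group --- note that Section~\ref{kdvb} of the paper warns that the KP and BKP Grassmannian points of the same function are \emph{different}, so the Grassmannian bookkeeping here is not cosmetic; and (iii) deal with the completion issues inherent in identifying ``all KdV tau-functions'' with a group orbit. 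None of (i)--(iii) is carried out. The analytic route has the same hole: you would need to produce the additional input beyond the $2$-reduced bilinear identities (the wave-operator/Sato-equation argument you allude to) and actually derive $\res_z z^{-1}e^{\xi({\bf s}-{\bf s}',z)}w(z,{\bf s})w(-z,{\bf s}')=1$ from it. As it stands the proposal is a correct reformulation of the statement plus a plausible strategy, not a proof.
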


The KdV hierarchy is a 2-reduction of the KP hierarchy. 
Let us consider a tau-function of KdV hierarchy $\tau_{\KdV}$ and the point ${\mathcal W}^{\KdV}\in \Gr^{(0)}$ associated to it as a tau-function of the KP hierarchy. If we consider the same tau-function as a tau-function of the BKP hierarchy, in general it defines another point of the same Sato Grassmannian ${\mathcal W}^{\BKP}\in \Gr^{(0)}_B \subset  \Gr^{(0)}$. The reason is that the
Baker--Akhiezer functions of two hierarchies have different form.  

Indeed, as a tau-function of the KP hierarchy, $\tau_{\KdV}$ defines the 
Baker--Akhiezer function by
\be
\Psi^{\KdV}(z,{\bf t})=e^{\sum_{k\in {\mathbb Z}_{\odd}^+} t_k z^k} \frac{\tau_{\KdV}({\bf t}-[z^{-1}])}{\tau_{\KdV}(\bf t)},
\ee
where we put $t_{2k}=0$ for $k>0$.
If we represent it in terms of the corresponding BKP tau-function using Theorem \ref{MT}, we get
\be
\Psi^{\KdV}(z,{\bf t}/2)=e^{\frac{1}{2}\sum_{k\in {\mathbb Z}_{\odd}^+} t_k z^k}  \frac{\tau_{\BKP}({\bf t}-2[z^{-1}])}{\tau_{\BKP}(\bf t)}.
\ee
The Baker--Akhiezer function of the same tau-function, considered in the context of the BKP hierarchy, is
\be
\Psi^{\BKP}(z,{\bf t})=e^{\sum_{k\in {\mathbb Z}_{\odd}^+} t_k z^k}  \frac{\tau_{\BKP}({\bf t}-2[z^{-1}])}{\tau_{\BKP}(\bf t)}.
\ee
If ${\mathcal W}^{\BKP}\in \Gr^{(0)}_{B+}$ then at ${\bf t}={\bf 0}$ these two Baker--Akhiezer functions coincide
\be\label{firstb}
\Phi_1^{\KdV}(z)=\Phi_1^{\BKP}(z)=\tau_{\BKP}(-2[z^{-1}])=:\Psi(z).
\ee
However, already the next basis vectors are different:
\begin{equation}
\begin{split}
\Phi_2^{\KdV}(z)&=\left.2 \frac{\p}{\p t_1}\Psi^{\KdV}(z,{\bf t}/2)\right|_{{\bf t}={\bf 0}}\\
&=z\Psi(z)+\left.2 \frac{\p}{\p t_1}\tau_{\BKP}({\bf t}-2[z^{-1}])\right|_{{\bf t}={\bf 0}}-2\Psi(z)\left. \frac{\p}{\p t_1}\tau_{\BKP}({\bf t})\right|_{{\bf t}={\bf 0}},
\end{split}
\end{equation}
and
\begin{equation}
\begin{split}
\Phi_2^{\BKP}(z)&=\left. \frac{\p}{\p t_1}\Psi^{\BKP}(z,{\bf t}/2)\right|_{{\bf t}={\bf 0}}\\
&=z\Psi(z)+\left. \frac{\p}{\p t_1}\tau_{\BKP}({\bf t}-2[z^{-1}])\right|_{{\bf t}={\bf 0}}-\Psi(z)\left. \frac{\p}{\p t_1}\tau_{\BKP}({\bf t})\right|_{{\bf t}={\bf 0}}.
\end{split}
\end{equation}
Moreover, it is clear that this difference, in general, cannot be compensated by $\Psi$. Hence, in general, 
\be
{\mathcal W}^{\KdV}\neq {\mathcal W}^{\BKP}.
\ee
Therefore, the Kac--Schwarz algebras of a KdV tau-function, considered in the frameworks of the KP and BKP hierarchies, are different.

\section{Generalized Br\'ezin--Gross--Witten model}\label{GBGW}

\subsection{Intersection theory}
Denote by $\overline {\mathcal M}_{g,n}$ the Deligne--Mumford compactification of the moduli space of all compact Riemann surfaces of genus~$g$ with~$n$ distinct marked points. It is a non-singular complex orbifold of dimension~$3g-3+n$. It is empty unless the stability condition
\begin{gather}\label{stability}
2g-2+n>0
\end{gather}
is satisfied. 

For each marking index~$i$ consider the cotangent line bundle ${\mathbb{L}}_i \rightarrow \overline{\mathcal{M}}_{g,n}$, whose fiber over a point $[\Sigma,z_1,\ldots,z_n]\in \overline{\mathcal{M}}_{g,n}$ is the complex cotangent space $T_{z_i}^*\Sigma$ of $\Sigma$ at $z_i$. Let $\psi_i\in H^2(\overline{\mathcal{M}}_{g,n},\mathbb{Q})$ denote the first Chern class of ${\mathbb{L}}_i$. We consider the intersection numbers
\begin{gather}\label{eq:products}
\<\tau_{a_1} \tau_{a_2} \cdots \tau_{a_n}\>_g:=\int_{\overline{\mathcal{M}}_{g,n}} \psi_1^{a_1} \psi_2^{a_2} \cdots \psi_n^{a_n}.
\end{gather}
The integral on the right-hand side of~\eqref{eq:products} vanishes unless the stability condition~\eqref{stability} is satisfied, all  $a_i$ are non-negative integers, and the dimension constraint 
\be\label{d1}
3g-3+n=\sum_{i=1}^n a_i
\ee 
holds true. Let $T_i$, $i\geq 0$, be formal variables and let
\be
\tau_{KW}:=\exp\left(\sum_{g=0}^\infty \sum_{n=0}^\infty \hbar^{2g-2+n}F_{g,n}\right),
\ee
where
\be
F_{g,n}:=\sum_{a_1,\ldots,a_n\ge 0}\<\tau_{a_1}\tau_{a_2}\cdots\tau_{a_n}\>_g\frac{\prod T_{a_i}}{n!}.
\ee
Witten's conjecture \cite{Wit91}, proved by Kontsevich \cite{Kon92}, states that the partition function~$\tau_{KW}$ becomes a tau-function of the KdV hierarchy after the change of variables~$T_n=(2n+1)!!t_{2n+1}$. This is the Kontsevich-Witten tau-function.

In this paper we will focus on a different version of intersection theory on the moduli spaces, which is also governed by the KdV hierarchy, and was recently introduced by  Norbury \cite{Norb}.
Namely, he introduced $\Theta$-classes, $\Theta_{g,n}\in H^{4g-4+2n}(\overline{\mathcal{M}}_{g,n})$, and described their intersections with the $\psi$-classes
\be
\<\tau_{a_1} \tau_{a_2} \cdots \tau_{a_n}\>_g^\Theta= \int_{\overline{\mathcal{M}}_{g,n}}\Theta_{g,n} \psi_1^{a_1} \psi_2^{a_2} \cdots \psi_n^{a_n}. 
\ee
Again, the integral on the right-hand side vanishes unless the stability condition~\eqref{stability} is satisfied, all $a_i$ are non-negative integers, and the dimension constraint 
\be\label{d2}
g-1=\sum_{i=1}^n a_i
\ee 
holds true.
Consider the generating function of the 
intersection numbers of $\Theta$-classes and $\psi$-classes
\be
F^\Theta_{g,n}= \sum_{a_1,\ldots,a_n\ge 0}  \frac{\prod T_{a_i}}{n!} \int_{\overline{\mathcal{M}}_{g,n}}\Theta_{g,n} \psi_1^{a_1} \psi_2^{a_2} \cdots \psi_n^{a_n}, 
\ee
then it was conjectured by Norbury \cite{Norb}, that it provides a direct analog of the Kontsevich--Witten tau-function. Recently this conjecture was proven by Chidambaram, Garcia-Failde, and Giacchetto.
\begin{theorem*}[\cite{NorbP}]
The generating function
\be
\tau_\Theta = \exp\left(\sum_{g=0}^\infty \sum_{n=0}^\infty \hbar^{2g-2+n}F_{g,n}^\Theta \right)
\ee
becomes a tau-function of the KdV hierarchy after the change of variables~$T_n=(2n+1)!!t_{2n+1}$.
\end{theorem*}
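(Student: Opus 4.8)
The plan is to imitate, for $\tau_\Theta$, the classical proof that $\tau_{KW}$ is a KdV tau-function, the only new feature being that Norbury's class pushes the constraints onto the second, Br\'ezin--Gross--Witten, branch of solutions of the string equation. Concretely I would proceed in three stages: (a) extract from the geometry of $\Theta_{g,n}$ a closed recursion among the numbers $\<\tau_{a_1}\cdots\tau_{a_n}\>_g^\Theta$; (b) rephrase this recursion, after the substitution $T_n=(2n+1)!!\,t_{2n+1}$, as a system of Virasoro constraints annihilating $\tau_\Theta$; (c) invoke the standard Kac--Schwarz principle that a generating function killed by such a system is a $2$-reduced (hence KdV) tau-function.

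For stage (a) the two geometric inputs I expect to need are: (i) the forgetful-map relation $\Theta_{g,n+1}=\psi_{n+1}\cdot\pi^\ast\Theta_{g,n}$, where $\pi\colon\overline{\mathcal M}_{g,n+1}\to\overline{\mathcal M}_{g,n}$ drops the last point; and (ii) the restriction of $\Theta_{g,n}$ to the boundary divisors, which should factor multiplicatively as a product of $\Theta$-classes on the normalizations, so that $\Theta$ carries a cohomological field theory structure (without a flat unit), with all genus-zero pieces vanishing for dimension reasons. Combining (i) with the comparison $\psi_i=\pi^\ast\psi_i+D_i$ and the projection formula yields the string and dilaton equations for the $\Theta$-numbers; the extra factor $\psi_{n+1}$ in (i) is exactly what deforms these away from their Kontsevich--Witten form and selects the BGW branch. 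Feeding (ii) into the usual splitting argument over the boundary of $\overline{\mathcal M}_{g,n}$ then upgrades these to the full Dijkgraaf--Verlinde--Verlinde-type recursion, expressing $\<\tau_{k+1}\prod_i\tau_{a_i}\>_g^\Theta$ through intersection numbers of smaller $2g-2+n$.

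For stages (b) and (c): after $T_n=(2n+1)!!\,t_{2n+1}$ the recursion becomes a family of differential operators $L_m\tau_\Theta=0$ in the variables $t_{2n+1}$, with the BGW-type normal ordering (no dilaton shift $t_3\mapsto t_3-1$, equivalently a modified constant term), and these $L_m$ close into half of the Virasoro algebra. One then reads off the corresponding point $\mathcal W_\Theta\in\Gr^{(0)}$ from the Baker--Akhiezer function of $\tau_\Theta$: the lowest constraint produces a Kac--Schwarz operator annihilating $\mathcal W_\Theta$ (the quantum curve), and the remaining $L_m$ force $z^2\cdot\mathcal W_\Theta\subset\mathcal W_\Theta$, i.e. the $2$-reduction of KP, which is precisely the KdV hierarchy. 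An alternative I would keep in reserve is to identify $\tau_\Theta$ directly with a Kontsevich-type matrix integral --- concretely the generalized Br\'ezin--Gross--Witten model --- by a Laplace-transform and asymptotic-expansion argument in the style of Kontsevich's proof of Witten's conjecture, and then quote the known KdV integrability of that integral; this is also the route that ties $\tau_\Theta$ to the rest of the present paper.

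The main obstacle is input (ii): proving that $\Theta_{g,n}$ restricts correctly to the boundary, i.e. that it genuinely assembles into a (non-unital) CohFT. This needs a usable geometric model of the class --- via the moduli of spin curves / theta characteristics, as (up to a power of $2$) a top Chern class of the derived push-forward of a square-root of the relative dualizing sheaf --- together with a careful analysis of how that bundle degenerates across a node, both in the separating and non-separating cases; the precise multiplicativity constant at a node is exactly the delicate point. A secondary, purely bookkeeping, obstacle is matching normalizations in the dictionary ``DVV recursion $\Leftrightarrow$ Virasoro constraints'', in particular fixing $\<\tau_0\>_1^\Theta$, which pins down $L_{-1}$ (or its BGW substitute) and hence which KdV solution one lands on. Once KdV is established, Theorem~\ref{MT} immediately promotes $\tau_\Theta(\mathbf t/2)$ to a BKP tau-function, which is the contact point with the constructions of the following sections.
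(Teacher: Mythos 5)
This statement is not proved in the paper at all: it is quoted verbatim as an external result of Norbury, cited as \cite{Norb}, and used as background for introducing the (generalized) BGW tau-function. So there is no internal argument to compare yours against; the only fair comparison is with the strategy of the cited reference, and your outline does track it — Norbury's route is indeed to establish structural properties of the classes $\Theta_{g,n}$ (the pullback relation $\Theta_{g,n+1}=\psi_{n+1}\pi^*\Theta_{g,n}$ and multiplicative restriction to boundary divisors, coming from the construction of $\Theta$ on the moduli of spin curves), derive BGW-type Virasoro constraints, and identify $\tau_\Theta$ with $\tau_{BGW}$, whose KdV integrability is known from its Kontsevich-type matrix-model representation. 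Your ``alternative kept in reserve'' is in fact the contact point the present paper relies on: it states separately that $\tau_\Theta=\tau_{BGW}$.

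That said, as a proof your text is a plan rather than an argument, and the gap is the one you yourself flag: input (ii), the boundary factorization of $\Theta_{g,n}$ (equivalently the non-unital CohFT structure), is asserted as ``expected'' and never established, and input (i) is likewise only ``expected to be needed.'' Without (i) and (ii) none of stage (a) goes through, so stages (b) and (c) have nothing to act on; and even granting (a), the normalization issue you defer (fixing $\<\tau_0\>_1^\Theta$ and hence which branch of the string equation one lands on) is exactly what distinguishes the BGW solution from the Kontsevich--Witten one, so it cannot be dismissed as bookkeeping. These points are precisely the content of \cite{Norb} (and its sequels); reproducing them requires the explicit spin-curve model of $\Theta$ and a nodal degeneration analysis, which your proposal names but does not carry out. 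In the context of this paper the correct move is simply to cite \cite{Norb}, as the author does.
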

Moreover, $\tau_\Theta$ is nothing but a tau-function of the Br\'ezin--Gross--Witten (BGW) model 
\be
\tau_\Theta=\tau_{BGW}.
\ee
We refer the reader to \cite{Norb,N2,NorbP} for a detailed presentation. 

The BGW tau-function can be described by a matrix model. This matrix model description has a natural one-parameter deformation, which we will consider below. Let $\Lambda=\diag(\lambda_1,\lambda_2,\dots,\lambda_M)$ be a diagonal matrix.
For any function $f$, dependent on the infinite set of variables ${\bf t}=(t_1,t_3,t_5,\dots)$, let
\be\label{Miwa}
f\left(\left[\Lambda^{-1}\right]\right):=f({\bf t})\Big|_{t_k=\frac{1}{k}\Tr \Lambda^{-k}}
\ee
be the {\em Miwa parametrization}. The {\em Generalized Br\'ezin--Gross--Witten model} was introduced in \cite{MMS} and further investigated in \cite{ABGW}. In the Miwa parametrization it is given by the asymptotic expansion of the matrix integral
\be
\tau_{BGW}([\Lambda^{-1}],N):=\tilde{\mathcal C}^{-1} \int[ d\Phi] \exp\left(-\frac{1}{2\hbar}\Tr\left(\Lambda^2 \Phi+\Phi^{-1}+2\hbar(N-M)\log \Phi\right)\right).
\ee
where one integrates over the normal $M\times M$ matrices.
\begin{remark}
Let us note that to make the normalization compatible with the one of Norbury, we substitute $\hbar$ of \cite{ABGW} with $2\hbar$.
\end{remark}
For $N=0$ the generalized BGW tau-function reduces to the original BGW model, $\tau_{BGW}({\bf t},0)=\tau_{BGW}({\bf t})$.
For arbitrary $N$, this is a tau-function of the KdV hierarchy. Moreover, the tau-functions for different values of the parameter $N$ (not to be confused with $M$, the size of the matrices) are related to each other by the MKP hierarchy. The first terms of the series expansion of $\tau_{BGW}({\bf t},N)$ were described in \cite{ABGW}.
The Virasoro constrains for the generalized BGW model can be derived with the help of the Kac--Schwarz approach \cite{ABGW}. These Virasoro constraints lead to the cut-and-join description
\be\label{cajg}
\tau_{BGW}({\bf t},N)=e^{\hbar \widehat{W}_0(N)}\cdot 1,
\ee
with the cut-and-join operator 
\be\label{CAJN}
\widehat{W}_0(N)=\sum_{k,m\in {\mathbb Z}_{\odd}^+} \left(kmt_{k}t_{m}\frac{\p}{\p t_{k+m-1}}+\frac{1}{2}(k+m+1)t_{k+m+1}\frac{\p^2}{\p t_k \p t_m}\right)+\left(\frac{1}{8}-\frac{N^2}{2}\right)t_1.
\ee

\subsection{Cut-and-join operator and  BKP tau-function}

In \cite{ABKP} it was proven that after a transformation $t_k \mapsto t_k/2$ the cut-and-join operator (\ref{CAJN}) acquires the form
\be
\widehat{W}_0(N)=\frac{1}{4}\widehat{M}_{-1}^B+\left(\frac{1}{16}-\frac{N^2}{4}\right)\widehat{J}^B_{-1},
\ee
where the generators $\widehat{J}^B_{-1}$ and $\widehat{M}^B_{-1}$ are given by (\ref{JB}) and (\ref{MB}).
Hence, it is an element of the BKP symmetry algebra, and $\tau_{BGW}({\bf t}/2,N)$ is a tau-function of the BKP hierarchy.
\begin{remark}
The BKP integrability also immediately follows from the KdV one, see Theorem \ref{MT}.
\end{remark}
In the basis (\ref{Wop}) the cut-and-join operator has the form
\be\label{gcaj}
\widehat{W}_0(N)=\frac{1}{4}\widehat{W}_{-1,2}^B+\left(\frac{1}{16}-\frac{N^2}{4}\right)\widehat{W}_{-1,0}^B,
\ee
and is equal to $\widehat{W}^B_{{\mathtt w}_0(N)}$ with
\be
{\mathtt w}_0(N)=-\frac{1}{2}\left(z\p_z^2+\left(\frac{1}{4}-N^2\right)z^{-1}\right) \in w_{1+\infty}^B.
\ee 
In this section we will find another expression for the BKP tau-function $\tau_{BGW}({\bf t}/2,N)$ in terms of the group elements of the $W_{1+\infty}^B$ algebra.

Let 
\be
F(z;a):=\sum_{k=1}^\infty \frac{B_{k+1}(a+1/2)}{(k+1)kz^k},
\ee
where $B_{k}(z)$ are the {\em Bernoulli polynomials}.
This series appears in Stirling's expansion of the gamma function,
\be\label{Gammaas}
\Gamma(z+1/2-a)\sim \sqrt{2\pi}z^{z-a}e^{-z}e^{F(z;a)},
\ee
valid for large values of $|z|$ with $|\arg(z)|<\pi$. Here $a$ is an arbitrary finite complex number. 
Let us consider the group operator of the algebra $w_{1+\infty}^B$
\be
{\mathtt O}_B(N)=\exp\left(-\log(-N^2)D+2\sum_{k\in  {\mathbb Z}_{\even}^+}\frac{B_{k+1}(D+1/2)}{(k+1)kN^k}\right),
\ee
where
\be
D:=z\p_z.
\ee
It can be obtained from the asymptotic expansion of a combination of the gamma-functions as $N$ goes to infinity with finite $D$,
\be
{\mathtt O}_B(N)\sim \frac{\Gamma(1/2+N)}{\Gamma(D+1/2+N)}\frac{\Gamma(1/2-N)}{\Gamma(D+1/2-N)}.
\ee
It is easy to see that 
\be
D^k \in w_{1+\infty}^B, \,\,\,\,\,\,  \mathrm{for}  \quad k \in  {\mathbb Z}_{\odd}^+.
\ee
Since $B_{k}(x+1/2)\in x\,{\mathbb Q}[x^2]$ for odd $k$, the operator $\log(\mathtt{O}_B(N))$ is an odd series in $D$, therefore,
\be
\log({\mathtt O}_B(N))\in w_{1+\infty}^B.
\ee
The operator $\mathtt{w}_0(N)$ can be related to the operator $- z^{-1}/2$ by a conjugation:

\begin{lemma}\label{lem5}
\be
{\mathtt w}_0(N)=- {\mathtt O}_B(N)\,  \frac{z^{-1}}{2} \, {\mathtt O}_B(N)^{-1}.
\ee
\end{lemma}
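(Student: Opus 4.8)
The plan is to reduce the operator identity to a scalar functional equation, using that $\mathtt{O}_B(N)$ depends only on $D=z\p_z$. Write $\mathtt{O}_B(N)=g(D)$, where the symbol $g$ is the formal series in $1/N$ with polynomial‑in‑$x$ coefficients obtained by substituting $D\mapsto x$ in the exponent; it acts on each monomial by $g(D)\,z^k=g(k)\,z^k$. Since $z^{-1}$ lowers the $D$‑eigenvalue by one we have $h(D)\,z^{-1}=z^{-1}\,h(D-1)$ for every symbol $h$, so
\[
\mathtt{O}_B(N)\,\frac{z^{-1}}{2}\,\mathtt{O}_B(N)^{-1}=\frac12\,z^{-1}\,g(D-1)\,g(D)^{-1}.
\]
On the other side, from $\p_z=z^{-1}D$ one gets $\p_z^2=z^{-2}D(D-1)$, hence $z\p_z^2=z^{-1}D(D-1)$ and
\[
z\p_z^2+\Bigl(\tfrac14-N^2\Bigr)z^{-1}=z^{-1}\Bigl(D(D-1)+\tfrac14-N^2\Bigr)=z^{-1}\Bigl((D-\tfrac12)^2-N^2\Bigr),
\]
so $-2\,{\mathtt w}_0(N)=z^{-1}\bigl(D-\tfrac12-N\bigr)\bigl(D-\tfrac12+N\bigr)$. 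Comparing the two displays, the lemma is equivalent to the single identity
\[
g(x-1)\,g(x)^{-1}=\bigl(x-\tfrac12-N\bigr)\bigl(x-\tfrac12+N\bigr)=\bigl(x-\tfrac12\bigr)^2-N^2 .
\]

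To establish this identity I would use the Gamma‑function presentation of $\mathtt{O}_B(N)$. Feeding Stirling's expansion \eqref{Gammaas} into each of the four factors (the two with argument involving $-N$ are first rewritten with the reflection formula $\Gamma(s)\Gamma(1-s)=\pi/\sin\pi s$, which on the integer eigenvalues of $D$ only contributes an overall sign) identifies $g$ with the symbol of $\tfrac{\Gamma(1/2+N)}{\Gamma(D+1/2+N)}\tfrac{\Gamma(1/2-N)}{\Gamma(D+1/2-N)}$, normalized so that $g(0)=1$; the normalization is automatic because $B_{k+1}(1/2)=0$ for the odd indices $k+1\ge 3$ occurring in the exponent. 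The functional equation is then immediate from $\Gamma(y+1)=y\,\Gamma(y)$:
\[
\frac{g(x-1)}{g(x)}=\frac{\Gamma(x+\tfrac12+N)}{\Gamma(x-\tfrac12+N)}\cdot\frac{\Gamma(x+\tfrac12-N)}{\Gamma(x-\tfrac12-N)}=\bigl(x-\tfrac12+N\bigr)\bigl(x-\tfrac12-N\bigr).
\]

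Alternatively — and this avoids the reflection‑formula bookkeeping entirely — one verifies the same identity directly from the Bernoulli‑polynomial definition. Up to the contribution of the $D$‑linear term of $\log\mathtt{O}_B(N)$, the difference $\log g(x-1)-\log g(x)$ equals $2\sum_{k\in{\mathbb Z}_{\even}^+}\frac{B_{k+1}(x-1/2)-B_{k+1}(x+1/2)}{(k+1)k\,N^k}$; the telescoping identity $B_n(y+1)-B_n(y)=n\,y^{n-1}$ applied with $y=x-\tfrac12$, $n=k+1$ gives $B_{k+1}(x-\tfrac12)-B_{k+1}(x+\tfrac12)=-(k+1)(x-\tfrac12)^k$, so the sum collapses to $-2\sum_{j\ge 1}\frac{(x-1/2)^{2j}}{2j\,N^{2j}}=\log\bigl(1-(x-\tfrac12)^2/N^2\bigr)$; combining with the $D$‑linear term turns this into $g(x-1)/g(x)=-N^2\bigl(1-(x-\tfrac12)^2/N^2\bigr)=(x-\tfrac12)^2-N^2$, as required.

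The conceptual content is thus nothing more than the recursion $\Gamma(y+1)=y\,\Gamma(y)$ transported through the shift $D\mapsto D-1$ implemented by $z^{-1}$; all manipulations take place coefficientwise in the formal parameter $1/N$ (equivalently, in a suitable completion of $w_{1+\infty}$), where $g$, $g^{-1}$, and the logarithms are all well defined. The step most prone to error is the careful tracking of the normalization of the leading $D$‑linear term (and, in the Gamma route, the signs produced by the reflection formula), so that the constant in front of $1-(x-\tfrac12)^2/N^2$ comes out to exactly $-N^2$; everything else is routine.
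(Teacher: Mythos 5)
Your main argument is exactly the paper's proof: rewrite ${\mathtt w}_0(N)=-\tfrac{1}{2}z^{-1}\bigl((D-\tfrac12)^2-N^2\bigr)$, commute $z^{-1}$ past the symbol via $f(D)\,z^{-1}=z^{-1}f(D-1)$, and evaluate $g(x-1)/g(x)$ from the Gamma presentation of ${\mathtt O}_B(N)$ using $\Gamma(y+1)=y\,\Gamma(y)$. That route is correct and complete.

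Your alternative Bernoulli-telescoping verification is a genuinely different check --- and in principle the more self-contained one, since it works directly with the formal-series definition of ${\mathtt O}_B(N)$ rather than with its asymptotic Gamma presentation --- and the collapse of the even-$k$ sum to $\log\bigl(1-(x-\tfrac12)^2/N^2\bigr)$ is right. But the linear-term bookkeeping does not close the way you assert: with the printed exponent $+\log(-N^2)D$, that term contributes $\log(-N^2)(x-1)-\log(-N^2)x=-\log(-N^2)$ to $\log g(x-1)-\log g(x)$, i.e.\ a multiplicative factor $(-N^2)^{-1}$ rather than $-N^2$, which would yield $g(x-1)/g(x)=\bigl((x-\tfrac12)^2-N^2\bigr)/N^4$. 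What your computation actually detects is a sign inconsistency in the paper itself: the large-$N$ expansion of $\tfrac{\Gamma(1/2+N)}{\Gamma(D+1/2+N)}\tfrac{\Gamma(1/2-N)}{\Gamma(D+1/2-N)}$ carries $-\log(-N^2)D$ in the exponent, not $+\log(-N^2)D$ (check $D=1$: the Gamma ratio is $(1/4-N^2)^{-1}\sim -N^{-2}$, while the printed formula gives $\sim -N^{2}$). The lemma, its proof, and Theorem \ref{T3} are all consistent with the Gamma symbol $g(x)=\tfrac{\Gamma(1/2+N)\Gamma(1/2-N)}{\Gamma(x+1/2+N)\Gamma(x+1/2-N)}$, so the fix is a minus sign in the displayed definition of ${\mathtt O}_B(N)$, not a change to your telescoping; your own warning about tracking ``the normalization of the leading $D$-linear term'' was exactly on target.
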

\begin{proof}
One can rewrite the operator ${\mathtt w}_0(N)$ in terms of $D$ and $z^{-1}$,
\be
{\mathtt w}_0(N)=-\frac{z^{-1}}{2}\left(D^2-D+\frac{1}{4}-N^2\right).
\ee
Then
\begin{equation}
\begin{split}
 {\mathtt O}_B(N)\, \frac{z^{-1}}{2} \, {\mathtt O}_B(N)^{-1}
 &= \frac{z^{-1}}{2}  \frac{\Gamma(D+1/2+N)}{\Gamma(D-1/2+N)}\frac{\Gamma(D+1/2-N)}{\Gamma(D-1/2-N)}\\
&= \frac{z^{-1}}{2} (D-1/2+N)(D-1/2-N)\\
&=  \frac{z^{-1}}{2} \left(D^2-D+\frac{1}{4}-N^2\right).
\end{split}
\end{equation}
\end{proof}

Let us consider the group element of the $W_{1+\infty}^B$ algebra
\be\label{O}
\widehat{O}_B(N):=e^{\widehat{W}^B_{\log({\mathtt O}_B(N))}}.
\ee
It is obvious that $ \widehat{O}_B(N)\cdot 1=1$, then from Lemma \ref{lem5} and cut-and-join representation (\ref{cajg}) we have
\begin{theorem}\label{T1}
Tau-function of the generalized BGW model is a solution of the BKP hierarchy, given by
\be
\tau_{BGW}({\bf t}/2,N)=\widehat{O}_B(N) \cdot e^{\frac{\hbar t_1}{4}}.
\ee
\end{theorem}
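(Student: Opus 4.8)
The plan is to combine the cut-and-join representation \eqref{cajg} with the conjugation identity of Lemma \ref{lem5} and the general Sato-Grassmannian machinery developed in Section \ref{S4}. The starting point is the known fact, recalled just above the statement, that under $t_k \mapsto t_k/2$ the cut-and-join operator becomes $\widehat{W}_0(N) = \widehat{W}^B_{{\mathtt w}_0(N)}$ with ${\mathtt w}_0(N) \in w_{1+\infty}^B$, and that $\tau_{BGW}({\bf t}/2,N) = e^{\hbar\widehat{W}_0(N)}\cdot 1 = e^{\hbar\widehat{W}^B_{{\mathtt w}_0(N)}}\cdot 1$. Since $e^{\hbar t_1/4}$ is itself the exponential of an element of the Heisenberg subalgebra, namely $\tfrac{\hbar}{4}t_1 = \tfrac{\hbar}{4}\widehat{W}^B_{-{\mathtt w}_{-1,0}^{\phantom{B}}/2}$ acting on the constant $1$ (using ${\mathtt w}_{-1,0} = -2z^{-1}$), both sides of the claimed identity will be expressed as $W_{1+\infty}^B$-group elements applied to $1$, and the task reduces to matching these group elements — or, more efficiently, to matching the two differential operators in the $w_{1+\infty}^B$-exponent modulo the right ideal that annihilates $1$.

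\medskip

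The key steps, in order, are as follows. First, I would write ${\mathtt w}_0(N) = -{\mathtt O}_B(N)\,\tfrac{z^{-1}}{2}\,{\mathtt O}_B(N)^{-1}$ from Lemma \ref{lem5}; note that $-z^{-1}/2 = \tfrac14 {\mathtt w}_{-1,0}$, so $\widehat{W}^B_{-z^{-1}/2}$ is (up to the factor coming through Definition \ref{def1}) the operator of multiplication by $\tfrac{t_1}{4}$. Second, I would use the group-conjugation property of the $W_{1+\infty}^B$ action: by Lemma \ref{Wcomm} the map ${\mathtt a}\mapsto \widehat{W}^B_{\mathtt a}$ is a Lie-algebra homomorphism up to the scalar cocycle $\mu$, and for a conjugation ${\mathtt b} = e^{\mathtt c}\,{\mathtt a}\,e^{-{\mathtt c}}$ with ${\mathtt c},{\mathtt a}\in w_{1+\infty}^B$ the cocycle contributions telescope, giving $e^{\widehat{W}^B_{\mathtt c}}\,\widehat{W}^B_{\mathtt a}\,e^{-\widehat{W}^B_{\mathtt c}} = \widehat{W}^B_{\mathtt b} + (\text{scalar})$. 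Applying this with ${\mathtt c} = \log({\mathtt O}_B(N)) \in w_{1+\infty}^B$ (which lies in $w_{1+\infty}^B$ because $B_k(x+1/2)$ is odd in $x$ for odd $k$, as observed in the excerpt), I would get
\[
e^{\widehat{W}^B_{\log({\mathtt O}_B(N))}}\; e^{\hbar\widehat{W}^B_{-z^{-1}/2}}\; e^{-\widehat{W}^B_{\log({\mathtt O}_B(N))}} \;=\; e^{\hbar\widehat{W}^B_{{\mathtt w}_0(N)}}\cdot e^{(\text{scalar})\hbar}.
\]
Third, I would act both sides on the vacuum $1$: since $\widehat{O}_B(N)\cdot 1 = 1$ (because $\log({\mathtt O}_B(N))$ contains only operators annihilating constants — every basis element $D^k$ with $k$ odd positive kills $1$, hence $\widehat{W}^B$ of it does too, by \eqref{Wferm}), the left side collapses to $\widehat{O}_B(N)\cdot e^{\hbar t_1/4}$, while the right side is $\tau_{BGW}({\bf t}/2,N)$ up to the scalar — which must be $1$ since both sides equal $1$ at ${\bf t}={\bf 0}$. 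Finally, BKP integrability of the result is automatic: $\widehat{O}_B(N)$ is a $W_{1+\infty}^B$-group element, so it maps the trivial BKP point of $\Gr^{(0)}_B$ to another point of $\Gr^{(0)}_B$ by Lemma \ref{L41}, hence $\widehat{O}_B(N)\cdot e^{\hbar t_1/4}$ is a BKP tau-function.

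\medskip

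The main obstacle I anticipate is the careful bookkeeping of the central/scalar term in the conjugation identity. One must check that $e^{\widehat{W}^B_{\mathtt c}}\,\widehat{W}^B_{\mathtt a}\,e^{-\widehat{W}^B_{\mathtt c}}$ really equals $\widehat{W}^B_{e^{\mathtt c}{\mathtt a}e^{-{\mathtt c}}}$ plus a \emph{constant} (not a nontrivial operator): the cocycle $\mu$ takes values in ${\mathbb C}$, so the nested commutators $\mathrm{ad}_{\widehat{W}^B_{\mathtt c}}^n(\widehat{W}^B_{\mathtt a})$ reproduce $\widehat{W}^B_{\mathrm{ad}_{\mathtt c}^n({\mathtt a})}$ exactly once $n\ge 1$ and contribute to $\mu$ only at the first level, so the Baker–Campbell–Hausdorff resummation closes, but this needs to be stated cleanly — ideally as a short standalone lemma of the form "if $\widehat{W}^B_{\mathtt c}$ is such that the relevant series converge on the Fock space, then $e^{\widehat{W}^B_{\mathtt c}}$ conjugates $\widehat{W}^B_{\mathtt a}$ to $\widehat{W}^B_{\mathrm{Ad}_{e^{\mathtt c}}{\mathtt a}}$ up to an additive constant." Alternatively, and perhaps more transparently, one can bypass the cocycle entirely by working at the fermionic level: translate $e^{\hbar\widehat{W}_0(N)}$ to $e^{\hbar W^B_{{\mathtt w}_0(N)}}$ via \eqref{Wferm1}, use $W^B_{{\mathtt w}_0(N)} = {\mathtt O}_B(N)\,(-W^B_{z^{-1}/2})\,{\mathtt O}_B(N)^{-1}$ as an identity of \emph{fermionic} bilinears following from Lemma \ref{lem5} and the functoriality of $W^B_{(\cdot)}$ in \eqref{Wferm}, where the conjugation is genuinely cocycle-free because the normal-ordering constant has been absorbed, then move everything back through the boson–fermion correspondence $\sigma$; the vacuum-annihilation property of $\log({\mathtt O}_B(N))$ then finishes the argument with no scalar to track. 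I would present the fermionic route as the main line and remark on the bosonic one.
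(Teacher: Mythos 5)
Your proposal is correct and follows essentially the same route as the paper, whose proof is a one-line appeal to exactly the ingredients you use: the cut-and-join representation (\ref{cajg}), the conjugation identity of Lemma \ref{lem5}, and the fact that $\widehat{O}_B(N)\cdot 1=1$. The only difference is that you make explicit the bookkeeping of the central term in $e^{\widehat{W}^B_{\mathtt c}}\,\widehat{W}^B_{\mathtt a}\,e^{-\widehat{W}^B_{\mathtt c}}=\widehat{W}^B_{\mathrm{Ad}_{e^{\mathtt c}}{\mathtt a}}+(\text{const})$ (fixed by evaluation at ${\bf t}={\bf 0}$, or seen to vanish by degree counting in $\mu$), a point the paper passes over in silence.
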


Using Lemma \ref{L41} we can get the
basis vectors for the point of the BKP Sato Grassmannian $\Gr^{(0)}_{B+}$, associated with $\tau_{BGW}({\bf t}/2,N)$. 
For the basis vectors normalized by $\Phi_k=z^{k-1}(1+O(z^{-1}))$ one has
\begin{equation}
\begin{split}
\Phi_k&= \frac{\Gamma(k-1/2+N)}{\Gamma(D+1/2+N)}\frac{\Gamma(k-1/2-N)}{\Gamma(D+1/2-N)} \cdot e^{- \frac{\hbar}{2z}} z^{k-1}\\
&= \frac{\Gamma(k-1/2+N)}{\Gamma(D+1/2+N)}\frac{\Gamma(k-1/2-N)}{\Gamma(D+1/2-N)} \cdot \sum_{m=0}^\infty (-\hbar)^m \frac{z^{k-m-1}}{2^m m!}\\
&=\sum_{m=0}^\infty (-\hbar)^m  \frac{\Gamma(k-1/2+N)}{\Gamma(k-m-1/2+N)}\frac{\Gamma(k-1/2-N)}{\Gamma(k-m-1/2-N)} \frac{z^{k-m-1}}{2^m m!}. 
\end{split}
\end{equation}

\begin{remark}
Expression for $\Phi_1$ coincides with the expression for the first basis vector
of the generalized BGW tau-function, obtained within the framework of KP in \cite{ABGW} see formula (77), if one substitutes $\hbar$ with $2\hbar$ and $\lambda$ with $-z$. However, the points of the Sato Grassmannian associated with KP and BKP frameworks are different, see Section \ref{kdvb}.
\end{remark}


\subsection{Schur Q-functions and BKP hierarchy}\label{S2}

Schur Q-functions were introduced by Schur \cite{SchQ} for the description of the projective representations of the symmetric groups. 
These functions are labeled by  strict partitions $\lambda \in \DP$. The definition and description of the Schur Q-functions can be found in Section 3.8 of Macdonald's book \cite{Mac}. 
In many aspects the Schur Q-functions are similar to the ordinary Schur functions.
For example, let us mention the {\em Cauchy formula}
\be\label{Cauchy}
 \sum_{\lambda \in \DP}  \frac{Q_\lambda({\bf t}) Q_\lambda({\bf s})}{2^{\ell(\lambda)}}=\exp \left(2 \sum_{k\in {\mathbb Z}_{\odd}^+} k t_{k} {s}_{k}\right),
\ee
and an analog of the standard hook formula \cite{SchQ}
\be\label{hook}
Q_\lambda(\delta_{k,1})=2^{|\lambda|}\frac{1}{\prod_{j=1}^{\ell(\lambda)} \lambda_j!} \prod_{k<m}\frac{\lambda_k-\lambda_m}{\lambda_k+\lambda_m}.
\ee

Schur Q-functions are closely related to the BKP hierarchy. Let us consider the class of hypergeometric solutions of the BKP hierarchy introduced by Orlov \cite{OBKP}. It was shown by Mironov, Morozov, and Natanzon \cite{MMNQ} that the hypergeometric solutions can be interpreted as the generating functions of the spin Hurwitz numbers. Hypergeometric solutions of the 2-component BKP hierarchy are given by the following sums over strict partitions
\begin{equation}
\begin{split}\label{HG}
\tau({\bf t},{\bf s})=\sum_{\lambda \in \DP} 2^{-\ell(\lambda)}r_\lambda Q_\lambda({\bf t}/2)Q_\lambda({\bf s}/2),
\end{split}
\end{equation}
where
\be
r_\lambda=\prod_{j=1}^{\ell(\lambda)} r(1)r(2)\dots r(\lambda_j)
\ee
for some function $r(z)$, which we call the {\em weight generating function}.

\begin{remark}
Let us note that if $r(k)=0$ for some $k \in {\mathbb Z}^+$, then the sum in (\ref{HG}) is finite and contains only partitions with $\lambda_1<k$. 
\end{remark}

Relation between Schur Q-functions and BKP hierarchy, in particular, is described by the following result of You:
\begin{theorem}[\cite{You}]\label{TYou}
For the states (\ref{lambda}) the boson-fermion correspondence yields
\be
\sigma_{B}^0 (\left| \lambda \right> )= 2^{-\ell(\lambda)/2} Q_\lambda({\bf t}/2).
\ee
Schur Q-functions are the polynomial solutions of the BKP hierarchy.
\end{theorem}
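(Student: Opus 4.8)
The plan is to compute the image $\sigma(|\lambda\rangle)$ of the monomial state directly, by packaging the fermions $\phi_{\lambda_i}$ into the generating functions $\phi(z_i)$, evaluating the resulting neutral-fermion vacuum expectation value in closed form, and then recognizing the answer as the classical Pfaffian/vertex-operator presentation of Schur Q-functions. First I would record that $[J_k,\phi(z)]=z^k\phi(z)$ for $k\in{\mathbb Z}_{\odd}^+$, a one-line check from \eqref{ac} and the definition of $J_k$; hence $e^{J_+({\bf t})}\phi(z)e^{-J_+({\bf t})}=e^{\xi({\bf t},z)}\phi(z)$ with the scalar $\xi({\bf t},z):=\sum_{k\in{\mathbb Z}_{\odd}^+}t_kz^k$. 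Conjugating each $\phi(z_i)$ through $e^{J_+({\bf t})}$ to the right, where $e^{J_+({\bf t})}\rvac=\rvac$, gives $\lvac e^{J_+({\bf t})}\phi(z_1)\cdots\phi(z_n)\rvac=\big(\prod_i e^{\xi({\bf t},z_i)}\big)\lvac\phi(z_1)\cdots\phi(z_n)\rvac$. The pure fermionic expectation value is computed by the neutral-fermion Wick theorem (using $\phi(z)\phi(w)=\Normord{\phi(z)\phi(w)}+\tfrac12\tfrac{z-w}{\underline z+w}$): it vanishes for $n$ odd, and for $n=2r$ it is the Pfaffian of the matrix with entries $\lvac\phi(z_i)\phi(z_j)\rvac=\tfrac12\tfrac{z_i-z_j}{\underline{z_i}+z_j}$, which follows from \eqref{Ha}. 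Then by the classical Schur Pfaffian identity $\Pf\!\left[\tfrac{x_i-x_j}{x_i+x_j}\right]=\prod_{i<j}\tfrac{x_i-x_j}{x_i+x_j}$ (see \cite{SchQ,PfC}),
\be
\lvac e^{J_+({\bf t})}\phi(z_1)\cdots\phi(z_{2r})\rvac=2^{-r}\prod_{i=1}^{2r}e^{\xi({\bf t},z_i)}\prod_{1\le i<j\le 2r}\frac{z_i-z_j}{\underline{z_i}+z_j}.
\ee
(The same identity can be obtained by iterating \eqref{Vtp} and inserting the explicit form \eqref{Vr} of $\widehat V_B$, along the lines of \cite{JMBKP,You}.)

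Now I would extract the coefficient of $z_1^{\lambda_1}\cdots z_\ell^{\lambda_\ell}$, where $\ell:=\ell(\lambda)$. If $\ell$ is even, then $|\lambda\rangle=\phi_{\lambda_1}\cdots\phi_{\lambda_\ell}\rvac$ by \eqref{lambda}, so $\sigma(|\lambda\rangle)$ is exactly that coefficient in the displayed identity with $n=\ell$, carrying the prefactor $2^{-\ell/2}$. If $\ell$ is odd, then $|\lambda\rangle=\sqrt2\,\phi_{\lambda_1}\cdots\phi_{\lambda_\ell}\phi_0\rvac$; since the negative modes of $\phi(z_{\ell+1})$ annihilate $\rvac$, the $z_{\ell+1}^0$-coefficient of $\lvac e^{J_+({\bf t})}\phi(z_1)\cdots\phi(z_{\ell+1})\rvac$ is $\lvac e^{J_+({\bf t})}\phi(z_1)\cdots\phi(z_\ell)\phi_0\rvac$, so $\sigma(|\lambda\rangle)$ equals $\sqrt2$ times the coefficient of $z_1^{\lambda_1}\cdots z_\ell^{\lambda_\ell}z_{\ell+1}^0$ in the displayed identity with $n=\ell+1$. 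Because $e^{\xi({\bf t},0)}=1$ and $(z_i-0)/(z_i+0)=1$, the variable $z_{\ell+1}$ drops out cleanly and $\sqrt2\cdot 2^{-(\ell+1)/2}=2^{-\ell/2}$. In both cases $\sigma(|\lambda\rangle)=2^{-\ell(\lambda)/2}$ times the coefficient of $z_1^{\lambda_1}\cdots z_\ell^{\lambda_\ell}$ in $\prod_{i=1}^{\ell}\exp\!\big(\sum_{k\in{\mathbb Z}_{\odd}^+}t_kz_i^k\big)\prod_{i<j}\tfrac{z_i-z_j}{\underline{z_i}+z_j}$. Finally this last coefficient is $Q_\lambda({\bf t}/2)$: indeed $\sum_{m\ge0}Q_{(m)}({\bf t}/2)z^m=\exp(\sum_{k\in{\mathbb Z}_{\odd}^+}t_kz^k)$, and $Q_\lambda$ is the Pfaffian of the two-row functions $Q_{(\lambda_i,\lambda_j)}$ (bordered by the $Q_{(\lambda_i)}$ when $\ell$ is odd); expanding $\prod_{i<j}\tfrac{z_i-z_j}{z_i+z_j}$ over pairings and invoking the Schur Pfaffian identity once more matches the coefficient with this Pfaffian. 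This is precisely the vertex-operator description of Schur Q-functions in Macdonald's book (\cite{Mac}, III.8); for example, for $\ell=2$ one reads off from $[z_1^az_2^b]$ the expression $Q_{(a)}Q_{(b)}+2\sum_{j\ge1}(-1)^jQ_{(a+j)}Q_{(b-j)}=Q_{(a,b)}$ (all $Q$'s at ${\bf t}/2$). Hence $\sigma(|\lambda\rangle)=2^{-\ell(\lambda)/2}Q_\lambda({\bf t}/2)$.

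For the remaining assertion, each monomial state $|\lambda\rangle$ lies in the orbit of $\rvac$ under the group of a central extension of $\go(\infty)$: one rotates the annihilation subspace $\sppan\{\phi_{-1},\phi_{-2},\dots\}$ onto the annihilator of $|\lambda\rangle$ and exponentiates, producing $G=\exp(\sum a_{km}\Normord{\phi_k\phi_m})$ with $G\rvac$ proportional to $|\lambda\rangle$, so that $\tau_G({\bf t})=\lvac e^{J_+({\bf t})}G\rvac$ is proportional to $Q_\lambda({\bf t}/2)$ and satisfies the Hirota identity \eqref{HBE}; conversely, a polynomial solution comes from a decomposable finite-wedge perturbation of $\rvac$, which must be monomial, giving the converse inclusion. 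The genuinely delicate point throughout is the bookkeeping of powers of $2$ — in particular the $\sqrt2$ built into the odd-length normalization \eqref{lambda}, which is exactly what makes the uniform factor $2^{-\ell(\lambda)/2}$ appear — together with the matching of the argument rescaling between $\widehat V_B$ (written in the variables ${\bf t}$) and the Schur Q-functions (appearing in ${\bf t}/2$). The two substantive external ingredients, the neutral-fermion Wick theorem and the Schur Pfaffian identity, are classical; once the displayed closed form is in hand the rest is coefficient extraction.
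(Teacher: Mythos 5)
The paper does not prove this statement at all: it is imported verbatim from \cite{You} and used as a black box, so there is no internal proof to compare against. What you have written is, in essence, the classical proof of You's theorem (the same route as \cite{You} and Macdonald \cite{Mac}, III.8), and the main computation is correct. The chain $[J_k,\phi(z)]=z^k\phi(z)$ for odd $k>0$, the conjugation through $e^{J_+({\bf t})}$, the neutral-fermion Wick theorem with $\lvac\phi(z)\phi(w)\rvac=\tfrac12\tfrac{z-w}{\underline{z}+w}$ (which is exactly \eqref{Ha} resummed), the Schur Pfaffian identity producing the factor $2^{-r}\prod_{i<j}\tfrac{z_i-z_j}{z_i+z_j}$, and the coefficient extraction are all sound; your handling of the odd-length case by adjoining $\phi(z_{\ell+1})$ and taking the $z_{\ell+1}^0$ coefficient is precisely the standard padding of $\lambda$ by a zero part, and the powers of $2$ (including the $\sqrt2$ in \eqref{lambda}) do close up to the uniform $2^{-\ell(\lambda)/2}$. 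The identification of the resulting coefficient with $Q_\lambda({\bf t}/2)$ is consistent with the paper's normalization, as one can cross-check against the Cauchy formula \eqref{Cauchy}: in these conventions $\sum_m Q_{(m)}({\bf t}/2)z^m=e^{\sum t_kz^k}$, and the Pfaffian expansion over matchings reproduces Macdonald's $Q_\lambda=\Pf\bigl[Q_{(\lambda_i,\lambda_j)}\bigr]$.

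The one genuinely weak spot is your last paragraph, on the assertion that Schur Q-functions are \emph{the} polynomial solutions of BKP. The forward direction (each $|\lambda\rangle$ lies in the $\go(\infty)$-group orbit of $\rvac$, hence $Q_\lambda({\bf t}/2)$ satisfies \eqref{HBE}) is fine, but the converse as you state it is not: a pure spinor finitely perturbing the vacuum need not be a monomial state. For instance $e^{a\Normord{\phi_1\phi_0}}\rvac$ is decomposable but not proportional to any $|\lambda\rangle$, and its tau-function $1+\mathrm{const}\cdot t_1$ is a polynomial BKP solution that is a single $Q_\lambda$ only after an affine shift of the times. The correct statement in \cite{You} classifies polynomial tau-functions up to constant multiples \emph{and} shifts of the arguments, and the argument requires more than ``the perturbation must be monomial.'' Since the paper only ever uses the explicit formula $\sigma_B^0(|\lambda\rangle)=2^{-\ell(\lambda)/2}Q_\lambda({\bf t}/2)$ (e.g.\ in the proof of Theorem \ref{T3}), this gap is peripheral here, but it should either be repaired or delegated back to \cite{You}.
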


\subsection{Schur Q-function expansion of the generalized BGW tau-function}\label{S3}

In \cite{ABKP} we suggest the following:
\begin{conjecture}\label{MC}
\be\label{MF}
\tau_{BGW}({\bf t})= \sum_{\lambda \in \DP} \left(\frac{\hbar}{16}\right)^{|\lambda|} \frac{Q_\lambda({\bf t}) Q_\lambda(\delta_{k,1})^3}{2^{\ell(\lambda)} Q_{2\lambda}(\delta_{k,1})^2}.
\ee
\end{conjecture}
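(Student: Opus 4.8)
The plan is to deduce \eqref{MF} from Theorem~\ref{T1}, which after the rescaling ${\bf t}\mapsto{\bf t}/2$ reads $\tau_{BGW}({\bf t}/2,N)=\widehat{O}_B(N)\cdot e^{\hbar t_1/4}$: it then suffices to expand the seed $e^{\hbar t_1/4}$ in Schur Q-functions, to compute the action of $\widehat{O}_B(N)$ in that basis, and finally to set $N=0$ and undo the rescaling. For the seed I would use the Cauchy formula \eqref{Cauchy} with ${\bf t}$ replaced by ${\bf t}/2$ and the second family of variables specialized to $s_1=\hbar/4$ and $s_k=0$ for odd $k\ge 3$; since $Q_\lambda$ is isobaric of weight $|\lambda|$ this gives
\[
e^{\hbar t_1/4}=\sum_{\lambda\in\DP}\frac{(\hbar/4)^{|\lambda|}\,Q_\lambda(\delta_{k,1})}{2^{\ell(\lambda)}}\,Q_\lambda({\bf t}/2).
\]

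Next I would show that $\widehat{O}_B(N)$ is diagonal in this basis. By Theorem~\ref{TYou} the bosonization $\sigma$ satisfies $\sigma(\left|\lambda\right>)=2^{-\ell(\lambda)/2}Q_\lambda({\bf t}/2)$, while by \eqref{Wferm1} and the definition \eqref{O} it intertwines $\widehat{O}_B(N)$ with the fermionic operator $e^{W_{\log({\mathtt O}_B(N))}^B}$. Since $\log({\mathtt O}_B(N))\in w_{1+\infty}^B$ is a series in $D=z\p_z$, the fermionic counterpart of Corollary~\ref{cor1}, namely $[\phi(z),W_{\mathtt a}^B]={\mathtt a}_z\phi(z)$, gives in Fourier modes $e^{W_{f(D)}^B}\phi_k\,e^{-W_{f(D)}^B}=e^{-f(k)}\phi_k$; together with $W_{\log({\mathtt O}_B(N))}^B\rvac=0$ and the expression \eqref{lambda} for $\left|\lambda\right>$ as a monomial in the $\phi_{\lambda_j}$ (and $\phi_0$ when $\ell(\lambda)$ is odd), this yields $e^{W_{\log({\mathtt O}_B(N))}^B}\left|\lambda\right>=\bigl(\prod_{j=1}^{\ell(\lambda)}o_{\lambda_j}(N)^{-1}\bigr)\left|\lambda\right>$, where $o_k(N)$ denotes the value of ${\mathtt O}_B(N)$ at $D=k$. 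Using the Gamma-function representation ${\mathtt O}_B(N)\sim\frac{\Gamma(1/2+N)\Gamma(1/2-N)}{\Gamma(D+1/2+N)\Gamma(D+1/2-N)}$, which at $D=k\in{\mathbb Z}_{\ge 0}$ telescopes to $o_k(N)=\prod_{i=1}^{k}\bigl((i-\tfrac12)^2-N^2\bigr)^{-1}$, I get
\[
\widehat{O}_B(N)\,Q_\lambda({\bf t}/2)=\rho_\lambda(N)\,Q_\lambda({\bf t}/2),\qquad\rho_\lambda(N)=\prod_{j=1}^{\ell(\lambda)}\prod_{i=1}^{\lambda_j}\bigl((i-\tfrac12)^2-N^2\bigr).
\]
Combined with the seed expansion this shows $\tau_{BGW}({\bf t}/2,N)=\sum_{\lambda\in\DP}2^{-\ell(\lambda)}(\hbar/4)^{|\lambda|}Q_\lambda(\delta_{k,1})\,\rho_\lambda(N)\,Q_\lambda({\bf t}/2)$; comparing with \eqref{HG}, the generalized BGW tau-function is precisely the hypergeometric BKP solution with weight generating function $r(k)=(k-\tfrac12)^2-N^2$, which proves the assertion of the abstract. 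Since, by the cut-and-join form \eqref{cajg} and \eqref{CAJN}, both sides of this identity are polynomial in $N^2$ degree by degree in $\hbar$, I may set $N=0$, where $\rho_\lambda(0)=\prod_j\bigl(\prod_{i=1}^{\lambda_j}(i-\tfrac12)\bigr)^2=4^{-|\lambda|}\prod_j\bigl((2\lambda_j-1)!!\bigr)^2$; replacing ${\bf t}\mapsto 2{\bf t}$ (which turns $\tau_{BGW}({\bf t}/2)$ into $\tau_{BGW}({\bf t})$ and $Q_\lambda({\bf t}/2)$ into $Q_\lambda({\bf t})$) then produces
\[
\tau_{BGW}({\bf t})=\sum_{\lambda\in\DP}\Bigl(\frac{\hbar}{16}\Bigr)^{|\lambda|}\frac{Q_\lambda(\delta_{k,1})\,\prod_{j}\bigl((2\lambda_j-1)!!\bigr)^2}{2^{\ell(\lambda)}}\,Q_\lambda({\bf t}).
\]

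It remains to match this with \eqref{MF}, i.e.\ to verify $\prod_{j}(2\lambda_j-1)!!=Q_\lambda(\delta_{k,1})/Q_{2\lambda}(\delta_{k,1})$. This follows from the hook-type formula \eqref{hook}: in the ratio $Q_{2\lambda}(\delta_{k,1})/Q_\lambda(\delta_{k,1})$ the products $\prod_{k<m}\tfrac{\lambda_k-\lambda_m}{\lambda_k+\lambda_m}$ cancel, leaving $2^{|\lambda|}\prod_j\tfrac{\lambda_j!}{(2\lambda_j)!}=2^{|\lambda|}\prod_j\tfrac{1}{2^{\lambda_j}(2\lambda_j-1)!!}=\prod_j\tfrac{1}{(2\lambda_j-1)!!}$, using $(2n)!=2^n n!\,(2n-1)!!$. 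Squaring this and substituting $\prod_j((2\lambda_j-1)!!)^2=Q_\lambda(\delta_{k,1})^2/Q_{2\lambda}(\delta_{k,1})^2$ into the last display gives exactly the right-hand side of \eqref{MF}, which proves the conjecture. The main obstacle is the diagonalization step: correctly transporting $\widehat{O}_B(N)$ through the boson-fermion correspondence, recognizing it as a diagonal group element acting on the $\phi_k$, and resumming the symbol $o_k(N)$ via the Gamma-function representation to reach the clean eigenvalue $\rho_\lambda(N)$; once $\rho_\lambda(N)$ is available, the Cauchy expansion of the seed, the $N=0$ specialization, and the hook-formula identity are all routine computations.
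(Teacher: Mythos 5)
Your proposal is correct and follows essentially the same route as the paper: the paper proves the conjecture as the $N=0$ specialization of Theorem~\ref{T3}, whose proof likewise starts from Theorem~\ref{T1}, expands the seed $e^{\hbar t_1/4}$ via the Cauchy formula, diagonalizes $\widehat{O}_B(N)$ on the $Q_\lambda({\bf t}/2)$ basis through the boson--fermion correspondence (your mode-conjugation computation of the eigenvalue $\rho_\lambda(N)=r_\lambda^{(N)}$ is just a slightly more explicit version of the paper's eq.~(\ref{af})), and then converts $\prod_j(2\lambda_j-1)!!$ into $Q_\lambda(\delta_{k,1})/Q_{2\lambda}(\delta_{k,1})$ via the hook formula (\ref{hook}). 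All the bookkeeping (the $4^{-|\lambda|}$ from $\rho_\lambda(0)$, the rescaling ${\bf t}\mapsto 2{\bf t}$, and the final ratio of $Q$'s) checks out.
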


For the generalized BGW model with arbitrary $N$ in \cite{ABKP} we conjecture a Schur Q-function expansion, which generalizes (\ref{MF}). This expansion is of the hypergeometric type (\ref{HG}), and the author conjectured in \cite{ABKP} that the generalized BGW tau-function, after a simple rescaling of times, is a hypergeometric solution of the BKP hierarchy. We prove this conjecture below, see Theorem \ref{T3}, and this is one of the most important results of this paper. 

It was also noted in \cite{ABKP}, see Remark 3.1, that the analogous to (\ref{MF}) Schur Q-function expansion for the Kontsevich--Witten tau-function was established in a series of papers \cite{DFIZ,JF,MMQ}. In \cite{ABKP} we conjecture that this tau-function is also a hypergeometric solution of the BKP hierarchy. 
The statement of this conjecture follows from the Schur Q-function expansion of \cite{DFIZ,JF,MMQ} and the relations for the values of the Schur Q-function on special loci proven in \cite{MMNO} (see also \cite{Liu}).

Let us use Theorem \ref{TYou} to describe the Schur Q-function expansion of the generalized BGW model. 
From (\ref{Wferm1}) we have
\be
\widehat{W}_{D^k}^B\cdot  \lvac e^{J_+({\bf t}) }=\lvac e^{J_+({\bf t}) }\frac{1}{2}\sum_{m\in {\mathbb Z}}(-1)^m m^k\Normord{\phi_{-m} \phi_m}.
\ee
Both sides of this identity vanish for even $k$. For the group operator we have
\be
\widehat{O}_B(N) \cdot  \lvac e^{J_+({\bf t}) } =   \lvac e^{J_+({\bf t}) } O (N),
\ee
where
\be\label{fermoo}
O (N) = \exp\left( \sum_{m\in {\mathbb Z}}(-1)^m \left(-\frac{1}{2}\log(-N^2)m+\sum_{k\in  {\mathbb Z}_{\even}^+}\frac{B_{k+1}(m+1/2)}{(k+1)kN^k}\right)\Normord{\phi_{-m} \phi_m}\right).
\ee
The tau-function is given by the vacuum expectation value
\be
\tau_{BGW}({\bf t}/2,N)=\lvac e^{J_+({\bf t}) }  O (N)  e^{\frac{\hbar}{2} J_{-1}}\rvac. 
\ee

Operators $\widehat{W}_{D^k}^B$ are diagonal operators of the $\go(\infty)$ algebra.
For odd $k$ using anticommutation relations for (\ref{lambda}) we have
\be
\frac{1}{2}\sum_{m\in {\mathbb Z}}(-1)^m m^k\Normord{\phi_{-m} \phi_m} \left| \lambda \right> = - \sum_{j=1}^{\ell(\lambda)} \lambda_j^k    \left| \lambda \right>,
\ee
hence, from Theorem \ref{TYou} for odd $k$ one has
\be\label{af}
\widehat{W}_{D^k}^B\cdot Q_\lambda({\bf t}/2) = - \sum_{j=1}^{\ell(\lambda)} \lambda_j^k \, Q_\lambda({\bf t}/2). 
\ee

Let us introduce
\be\label{weghtbgw}
r^{(N)}(z)=\frac{(2z-1)^2-4N^2}{4}.
\ee
Theorem \ref{T1} leads to the proof of Conjecture 2 of  \cite{ABKP}:
\begin{theorem}\label{T3}
The partition function of the generalized BGW model in the properly normalized times, $\tau_{BGW}({\bf t}/2, N)$ is a hypergeometric tau-function of the BKP hierarchy (\ref{HG}) with   $s_k^*=2\delta_{k,1}$:
\be\label{thex}
\tau_{BGW}({\bf t}/2,N)= \sum_{\lambda \in \DP} \left(\frac{\hbar}{4}\right)^{|\lambda|}  r_\lambda^{(N)} \frac{Q_\lambda({\bf t}/2) Q_\lambda(\delta_{k,1})}{ 2^{\ell(\lambda)}}.
\ee
\end{theorem}
\begin{proof}
Using Cauchy formula (\ref{Cauchy}) one gets
\be
e^{\frac{\hbar t_1}{4}}= \sum_{\lambda \in \DP} \left(\frac{\hbar}{4}\right)^{|\lambda|}  \frac{Q_\lambda({\bf t}/2) Q_\lambda(\delta_{k,1})}{2^{\ell(\lambda)}}.
\ee
From Theorem \ref{T1} and formula (\ref{af}) we have
\begin{equation}
\begin{split}
\tau_{BGW}({\bf t}/2,N)&=\widehat{O}_B(N)\cdot \sum_{\lambda \in \DP} \left(\frac{\hbar}{4}\right)^{|\lambda|}  \frac{Q_\lambda({\bf t}/2) Q_\lambda(\delta_{k,1})}{2^{\ell(\lambda)}}\\
&= \sum_{\lambda \in \DP} \left(\frac{\hbar}{4}\right)^{|\lambda|}  \frac{Q_\lambda({\bf t}/2) Q_\lambda(\delta_{k,1})}{2^{\ell(\lambda)}}  \prod_{j=1}^{\ell(\lambda)} \frac{\Gamma(1/2+N)}{\Gamma(-\lambda_j+1/2+N)}\frac{\Gamma(1/2-N)}{\Gamma(-\lambda_j+1/2-N)}\\
&= \sum_{\lambda \in \DP} \left(\frac{\hbar}{4}\right)^{|\lambda|}  \frac{Q_\lambda({\bf t}/2) Q_\lambda(\delta_{k,1})}{2^{\ell(\lambda)}}  \prod_{j=1}^{\ell(\lambda)} \frac{\Gamma(\lambda_j+1/2+N)}{\Gamma(1/2+N)}\frac{\Gamma(\lambda_j+1/2-N)}{\Gamma(1/2-N)}.
\end{split}
\end{equation}
This formula is equivalent to that of Theorem.
\end{proof}
Please note that all $\lambda_j$ are nonnegative, therefore no negative powers of $N$ appear in the proof.

The first few terms of the expansion \eqref{thex} are given by
\begin{equation}
\begin{split}
\tau_{BGW}({\bf t}/2,N)=1+\frac{\hbar}{16}(1-4N^2)Q_{(1)}({\bf t}/2)+\frac{\hbar^2}{256}(1-4N^2)(9-4N^2)Q_{(2)}({\bf t}/2)\\
+\frac{\hbar^3}{12288}(1-4N^2)(9-4N^2)(2(25-4N^2)Q_{(3)}({\bf t}/2)+(1-4N^2)Q_{(2,1)}({\bf t}/2))+O(\hbar^4).
\end{split}
\end{equation}

Let us stress that for the BGW tau-function ($N=0$) the statement of this theorem reduces to equation (\ref{MF}), because from (\ref{hook}) we have
\be
\frac{Q_\lambda(\delta_{k,1})}{Q_{2\lambda}(\delta_{k,1})}=\prod_{j=1}^{\ell(\lambda)} (2\lambda_j-1)!!.
\ee

For $N=k+1/2$ with $k\in {\mathbb Z}_{\geq 0}$  the weight generating function (\ref{weghtbgw}) vanishes for $z=k+1$, $r^{(k+1/2)}(k+1)=0$, hence, in this case the tau-function $\tau_{BGW}$ is a polynomial. These polynomial tau-functions are given by certain shifted Schur functions for the triangular partitions \cite{ABGW}, which coincide with the shifted Schur Q-functions \cite{ABKP}.

The formula in Theorem \ref{T3} is a direct analog of the Schur function expansion of the generating function for the strictly monotone Hurwitz numbers or Grothendieck's dessins d'enfant.
We claim that the generalized BGW model is a generating function of the spin version of the disconnected strictly monotone Hurwitz numbers. In this context the cut-and-join description of the generalized BGW model is a direct analog of the Zograf cut-and-join description of Grothendieck's dessins d'enfant \cite{Zog}. We will discuss the details of this identification with the spin Hurwitz numbers elsewhere.


\section{Higher models}\label{S6}

In this section we discuss a family of hypergeometric tau-functions for the  2-component BKP hierarchy. This family includes the model, considered in the previous section, and can serve as a deep generalization of this model.

 First, let us note that the operator $\widehat{O}_B(N)$ is a group operator, therefore it is invertible. Consider
\be\label{td}
\tau_{a,b}({\bf t},{\bf s},{\bf u}, {\bf w})=\frac{ \widehat{O}_B(u_a) \widehat{O}_B(u_{a-1}) \dots  \widehat{O}_B(u_1)}{\widehat{O}_B(w_b) \widehat{O}_B(w_{b-1})\dots\widehat{O}_B(w_1)} \cdot 
e^{\frac{1}{2} \sum_{k\in {\mathbb Z}_{\odd}^+}  k t_{k} {s}_{k}},
\ee
where $u_k$ and $w_k$ are formal parameters. We will focus on the case when the number of the parameters $u_k$ and $w_k$ is finite, however, many formulas below are valid also for the general case with infinitely many parameters.
We regard (\ref{td}) as a formal power series in the variables ${\bf t}$ and ${\bf s}$, and parameters ${\bf u}$ and ${\bf w}$.
By construction, this is a tau-function of 2-component BKP hierarchy. The definition immediately implies the free fermion formula
\be
\tau_{a,b}({\bf t},{\bf s},{\bf u},{\bf w})=\lvac e^{J_+({\bf t})} O(u_a)\dots O(u_1)  O^{-1}(w_b)\dots  O^{-1}(w_1)e^{J_-({\bf s})}  \rvac,
\ee
where the diagonal operators $O$ are given by (\ref{fermoo}).

For $a=1$, $b=0$ and $s_j=\delta_{j,1}\hbar/2$ it reduces to the generalized BGW tau-function, considered above
\be
\tau_{1,0}({\bf t},\delta_{k,1}\hbar/2, N,0)= \tau_{BGW}({\bf t}/2,N).
\ee
In (\ref{td}) we do not introduce the expansion parameter $\hbar$, it can be restored by rescaling of the variables ${\bf t}$, ${\bf s}$ and parameters ${\bf u}$, ${\bf w}$.

Operators $\widehat{O}_B(N_m)$ commute with each other, hence, their order in (\ref{td}) is not important
\begin{lemma}
For all $u_1,u_2$
\be
\left[\widehat{O}_B(u_1),\widehat{O}_B(u_2))\right]=0.
\ee
\end{lemma}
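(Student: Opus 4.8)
The plan is to reduce the commutativity of the group operators $\widehat{O}_B(u_1)$ and $\widehat{O}_B(u_2)$ to the commutativity of their generators inside the abelian (Heisenberg-diagonal) part of $W_{1+\infty}^B$. Recall from \eqref{O} that $\widehat{O}_B(N)=e^{\widehat{W}^B_{\log({\mathtt O}_B(N))}}$, and from the definition of ${\mathtt O}_B(N)$ that $\log({\mathtt O}_B(N))$ is an odd power series in $D=z\p_z$ with coefficients depending on $N$. Since two such series in a single variable $D$ commute as elements of $w_{1+\infty}$, we have $[\log({\mathtt O}_B(u_1)),\log({\mathtt O}_B(u_2))]=0$ in $w_{1+\infty}^B$. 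By Lemma \ref{Wcomm}, $[\widehat{W}^B_{{\mathtt a}},\widehat{W}^B_{{\mathtt b}}]=\widehat{W}^B_{[{\mathtt a},{\mathtt b}]}+\mu({\mathtt a},{\mathtt b})$, so the bracket of the two generators equals the central scalar $\mu\big(\log({\mathtt O}_B(u_1)),\log({\mathtt O}_B(u_2))\big)$.

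First I would check that this central term vanishes. The bilinear form $\mu({\mathtt a},{\mathtt b})$ is antisymmetric in ${\mathtt a},{\mathtt b}$ (this is visible from its defining double residue, which is manifestly antisymmetric under swapping ${\mathtt a}\leftrightarrow{\mathtt b}$), while for operators that are polynomials in $D$ alone the relevant pairing is symmetric: concretely, for ${\mathtt a}=f(D)$, ${\mathtt b}=g(D)$ one computes $\mu(f(D),g(D))$ and finds it symmetric in $f,g$ by the residue symmetry used already in the proof of the lemma describing $\widehat{W}_{k,m}^B$. An antisymmetric quantity that is also symmetric is zero. (Equivalently, one may simply note that the Heisenberg–Virasoro–$W^{(3)}$ generators $\widehat{J}_k^B,\widehat{L}_k^B,\widehat{M}_k^B$ with $k\le 0$ coming from powers of $z^{-1}D^m$ all have negative or zero mode index, and the central extension pairs a positive mode with its negative; since $\log({\mathtt O}_B(N))$ only involves $z^{-1}$ with the fixed power $-1$ times $D$-polynomials — wait, it actually only involves pure powers of $D$, i.e. mode index $0$ — the cocycle evaluated on two mode-zero elements vanishes because $\delta_{k,-m}$-type conditions force $k=m=0$, contributing nothing beyond a harmless constant which cancels by antisymmetry.)

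Granting $[\widehat{W}^B_{\log({\mathtt O}_B(u_1))},\widehat{W}^B_{\log({\mathtt O}_B(u_2))}]=0$, the exponentials commute: $e^{A}e^{B}=e^{B}e^{A}$ whenever $[A,B]=0$ as operators on the bosonic Fock space ${\mathbb C}[\![t_1,t_3,\dots]\!]$, which gives $[\widehat{O}_B(u_1),\widehat{O}_B(u_2)]=0$ and completes the proof. An alternative, perhaps cleaner, route is to pass to the fermionic side via \eqref{Wferm1}: $\widehat{O}_B(N)$ acts on $\lvac e^{J_+({\bf t})}$ as right multiplication by the diagonal fermionic operator $O(N)$ of \eqref{fermoo}, and two diagonal operators $O(u_1)=\exp(\sum_m c_m(u_1)\Normord{\phi_{-m}\phi_m})$, $O(u_2)=\exp(\sum_m c_m(u_2)\Normord{\phi_{-m}\phi_m})$ manifestly commute because all the $\Normord{\phi_{-m}\phi_m}$ commute among themselves (they are the Cartan elements of $\go(\infty)$). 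Since the map $\widehat{W}^B_{\mathtt a}\mapsto W^B_{\mathtt a}$ intertwines the two representations, commutativity on one side transfers to the other.

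The main obstacle is the bookkeeping around the central cocycle $\mu$: one must be careful that $\log({\mathtt O}_B(N))$ genuinely lies in the span of $D^k$ with $k$ odd (no $z^{-1}$ factors, contrary to ${\mathtt w}_0(N)$), so that the cocycle is evaluated on commuting, index-zero elements and therefore vanishes. Once that is pinned down the rest is the elementary fact that commuting operators have commuting exponentials; I would present the fermionic-diagonal argument as the short proof and relegate the $\mu$-vanishing remark to a parenthetical.
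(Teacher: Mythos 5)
Your proposal is correct and follows essentially the same route as the paper: observe that $\log({\mathtt O}_B(N))$ is an odd series in $D$, hence a linear combination of the commuting elements $D^k$ with $k\in{\mathbb Z}_{\odd}^+$, and invoke Lemma \ref{Wcomm} to conclude $[\widehat{W}^B_{D^k},\widehat{W}^B_{D^m}]=0$ so that the exponentials commute. You are more explicit than the paper about checking that the central cocycle $\mu$ vanishes on these grading-zero elements (and your fermionic-diagonal alternative is a clean way to see it); the paper subsumes this into the statement of Lemma \ref{Wcomm}.
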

\begin{proof}
From (\ref{O}) we see that $\log(\widehat{O}_B(N))$ is a linear combination of $\widehat{W}^B_{D^k}$ with $k \in {\mathbb Z}_{\odd}^+$. Lemma \ref{Wcomm} implies the identity
\be
\left[\widehat{W}^B_{D^k},\widehat{W}^B_{D^m}\right]=0
\ee
for all $k,m\in {\mathbb Z}_{\odd}^+$, from which the statement of the lemma immediately follows.
\end{proof}

By analogy with Theorem \ref{T3} for the weight generating function, given by (\ref{weghtbgw}), we have
\begin{proposition}
\be\label{generex}
\tau_{a,b}({\bf t},{\bf s},{\bf u}, {\bf w})= \sum_{\lambda \in \DP} \frac{r_\lambda^{(u_1)}  r_\lambda^{(u_2)} \dots r_\lambda^{(u_a)} }{r_\lambda^{(w_1)}  r_\lambda^{(w_2)} \dots r_\lambda^{(w_b)}}   \frac{Q_\lambda({\bf t}/2) Q_\lambda({\bf s}/2) }{ 2^{-\ell(\lambda)}  }
\ee
\end{proposition}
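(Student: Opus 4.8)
The plan is to reduce the statement to the diagonal action of the operators $\widehat{O}_B(N)$ on Schur Q-functions already extracted in the proof of Theorem~\ref{T3}. First I would rewrite the seed in~\eqref{td}: applying the Cauchy formula~\eqref{Cauchy} with $t_k\mapsto t_k/2$, $s_k\mapsto s_k/2$ gives
\be
e^{\frac{1}{2}\sum_{k\in{\mathbb Z}_{\odd}^+}k t_k s_k}=\sum_{\lambda\in\DP}2^{-\ell(\lambda)}Q_\lambda({\bf t}/2)Q_\lambda({\bf s}/2),
\ee
so that the seed is a sum over strict partitions whose $\lambda$-summand has fixed degree $|\lambda|$ in the grading $\deg t_k=\deg s_k=k$.

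Next I would use that, as a differential operator in the variables ${\bf t}$, each $\widehat{O}_B(N)$ is diagonal in the Schur Q-basis. Since $\log\widehat{O}_B(N)$ is a linear combination of the operators $\widehat{W}^B_{D^k}$ with $k\in{\mathbb Z}_{\odd}^+$ and, by~\eqref{af}, each $\widehat{W}^B_{D^k}$ multiplies $Q_\lambda({\bf t}/2)$ by $-\sum_{j=1}^{\ell(\lambda)}\lambda_j^k$, summing the series exactly as in the proof of Theorem~\ref{T3} yields $\widehat{O}_B(N)\cdot Q_\lambda({\bf t}/2)=r_\lambda^{(N)}Q_\lambda({\bf t}/2)$ with $r^{(N)}$ as in~\eqref{weghtbgw}. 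Being a group element, $\widehat{O}_B(N)$ is invertible with $\widehat{O}_B(N)^{-1}=e^{-\widehat{W}^B_{\log{\mathtt O}_B(N)}}$, which therefore acts on $Q_\lambda({\bf t}/2)$ by the reciprocal $\bigl(r_\lambda^{(N)}\bigr)^{-1}$. Using the commutativity of the operators $\widehat{O}_B(u)$ established above, I would then apply the whole product in~\eqref{td} term by term to the Cauchy expansion and collect the eigenvalues, obtaining
\be
\tau_{a,b}({\bf t},{\bf s},{\bf u},{\bf w})=\sum_{\lambda\in\DP}2^{-\ell(\lambda)}\,\frac{r_\lambda^{(u_1)}r_\lambda^{(u_2)}\cdots r_\lambda^{(u_a)}}{r_\lambda^{(w_1)}r_\lambda^{(w_2)}\cdots r_\lambda^{(w_b)}}\,Q_\lambda({\bf t}/2)Q_\lambda({\bf s}/2),
\ee
which is the asserted identity.

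The point that needs care---and which I regard as the main thing to justify rather than the main computational difficulty---is twofold. First, one must check that the infinite-order operators may be applied term by term to the sum over $\DP$; this is legitimate because each $\widehat{W}^B_{D^k}$ literally preserves every individual $Q_\lambda({\bf t}/2)$, so the operator action respects the grading and only the $\lambda$-summand of the Cauchy series feeds the $\lambda$-summand of the output, making the identity hold order by order. Second, once $b\geq 1$ the factors $\bigl(r_\lambda^{(w_j)}\bigr)^{-1}$ must be interpreted: the cleanest reading is coefficientwise in $Q_\lambda({\bf t}/2)Q_\lambda({\bf s}/2)$, so that the $\lambda$-coefficient is the explicit rational function $2^{-\ell(\lambda)}\prod_i r_\lambda^{(u_i)}/\prod_j r_\lambda^{(w_j)}$ of the parameters (a finite product for each fixed $\lambda$), with $\widehat{O}_B(w_j)^{-1}$ understood as the formal group inverse and the eigenvalue relation for numerical $N$ extended to the formal parameters.

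As a cross-check I would also note the purely fermionic route: insert the pairing relation $\langle\lambda|\mu\rangle=(-1)^{|\lambda|}\delta_{\lambda,\mu}$ of Section~\ref{SS2} as a resolution of the identity on ${\mathcal F}_B^0$ into the free-fermion expression for $\tau_{a,b}$, evaluate $\lvac e^{J_+({\bf t})}|\lambda\rangle$ and $\langle\lambda|e^{J_-({\bf s})}\rvac$ via You's Theorem~\ref{TYou} and its dual, and observe that the diagonal operator $O(u_a)\cdots O(u_1)O^{-1}(w_b)\cdots O^{-1}(w_1)$ has eigenvalue $\prod_i r_\lambda^{(u_i)}/\prod_j r_\lambda^{(w_j)}$ on $|\lambda\rangle$; the powers of $2$ then assemble automatically through the boson--fermion correspondence.
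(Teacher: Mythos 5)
Your argument is correct and is exactly the paper's intended proof: the paper offers no separate argument for this Proposition beyond the phrase ``by analogy with Theorem \ref{T3}'', and your expansion of the seed via the Cauchy formula followed by the diagonal (eigenvalue) action of the commuting group elements $\widehat{O}_B(u_i)$ and $\widehat{O}_B(w_j)^{-1}$ on each $Q_\lambda({\bf t}/2)$, with $\widehat{O}_B(N)\cdot Q_\lambda({\bf t}/2)=r_\lambda^{(N)}Q_\lambda({\bf t}/2)$, is precisely that analogy spelled out (your remarks on term-by-term application and on reading the reciprocals coefficientwise are the right provisos). Note only that your overall coefficient $2^{-\ell(\lambda)}$ --- which agrees with the Cauchy formula \eqref{Cauchy}, the hypergeometric form \eqref{HG}, and Theorem \ref{T3} --- differs from \eqref{generex} as typeset, where $2^{-\ell(\lambda)}$ sits in the denominator; this is evidently a sign slip in the exponent in the paper's statement, and your normalization is the correct one.
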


\begin{remark}
It is easy to see that in general (\ref{td}) is not a tau-function of the KdV hierarchy.
\end{remark}

We claim that these tau-functions are the generating functions for the interesting family of the weighted spin Hurwitz numbers, which are the spin analogs to the
double weighted Hurwitz numbers with rational weight generating functions, see, e.g., \cite{ACEH} and references therein. This section contains description of this family of the BKP tau-functions analogous to the description of the corresponding family for the ordinary Hurwitz numbers obtained in \cite{AN,ALS}. Possible choices of the weights for the weighed spin Hurwitz numbers associated with the  hypergeometric tau-functions for the  2-component BKP hierarchy are discussed in \cite{AS}.

\begin{remark}
This family of BKP tau-functions for $s_k=\delta_{k,1}$ was introduced by Orlov 
\cite{OBKP}. He proved, in particular, that this family is related to the hypergeometric solutions of the KP hierarchy by a square root relation.
Therefore, the generating functions of the spin Hurwitz numbers (\ref{td}) are related to the corresponding generating functions of the ordinary Hurwitz numbers by a square root relation, completely analogous to the one investigated in \cite{root}.
\end{remark}

\begin{remark}
It would be interesting to construct and investigate the matrix model description of the tau-functions (\ref{td}) similar to the one, constructed in \cite{AC,AN} for the hypergeometric solutions of the KP hierarchy.
\end{remark}


\subsection{Cut-and-join description}

Let us derive the cut-and-join description for the tau-functions (\ref{td}). 
Consider the operator 
\be
{\mathtt w}({\bf u},{\bf w})= z^{-1}\frac{ \prod_{j=1}^a \left((D-1/2)^2-u_j^2\right)}{\prod_{j=1}^b \left((D-1/2)^2-w_j^2\right)} \in w_{1+\infty}^B. 
\ee

\begin{theorem}\label{propcaj}
\be
\tau_{a,b}({\bf t},{\bf s},{\bf u}, {\bf w})=e^{- \sum_{k\in {\mathbb Z}_{\odd}^+}   {s}_{k}  \widehat{W}_{{\mathtt w}({\mathbf u},{\mathbf w})^k} }\cdot 1
\ee
\end{theorem}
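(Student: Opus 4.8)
The plan is to conjugate both sides of the asserted identity by the group operator
\be
\widehat{\mathcal O}:=\widehat{O}_B(u_a)\cdots\widehat{O}_B(u_1)\,\widehat{O}_B(w_b)^{-1}\cdots\widehat{O}_B(w_1)^{-1},
\ee
and thereby reduce everything to the operator identity $\mathcal O\,z^{-1}\,\mathcal O^{-1}={\mathtt w}({\mathbf u},{\mathbf w})$ inside $w_{1+\infty}$, where $\mathcal O:={\mathtt O}_B(u_a)\cdots{\mathtt O}_B(u_1)\,{\mathtt O}_B(w_b)^{-1}\cdots{\mathtt O}_B(w_1)^{-1}$; this identity is nothing but Lemma~\ref{lem5} stripped of its overall factor $-1/2$ and then iterated.

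I will use three elementary facts. (i) Since the $\widehat{O}_B(N)$ commute (established above; they are exponentials of the commuting operators $\widehat{W}_{D^k}^B$) and $ {\mathtt a} \mapsto\widehat{W}_ {\mathtt a} ^B$ is linear, one has $\widehat{\mathcal O}=\exp(\widehat{W}_ {\mathtt a} ^B)$ with $ {\mathtt a} =\sum_{j=1}^a\log({\mathtt O}_B(u_j))-\sum_{j=1}^b\log({\mathtt O}_B(w_j))\in w_{1+\infty}^B$ and $\mathcal O=e^{ {\mathtt a} }$, and moreover $\widehat{\mathcal O}\cdot 1=1$ because $\widehat{O}_B(N)\cdot 1=1$. (ii) For $k\in{\mathbb Z}_{\odd}^+$ we have $z^{-k}=-\frac12{\mathtt w}_{-k,0}$, so by \eqref{Wop} and \eqref{JB} the operator $\widehat{W}_{z^{-k}}^B=-\frac12\widehat{W}_{-k,0}^B=-\frac12\widehat{J}_{-k}^B=-\frac{k}{2}t_k$ is just multiplication by $-kt_k/2$; hence the Cauchy-type exponential in \eqref{td} can be rewritten as $e^{\frac12\sum_{k\in{\mathbb Z}_{\odd}^+}kt_ks_k}=e^{-\sum_{k\in{\mathbb Z}_{\odd}^+}s_k\widehat{W}_{z^{-k}}^B}\cdot 1$. (iii) By the computation in the proof of Lemma~\ref{lem5}, ${\mathtt O}_B(N)\,z^{-1}\,{\mathtt O}_B(N)^{-1}=z^{-1}\big((D-1/2)^2-N^2\big)=z^{-1}r^{(N)}(D)$; since each ${\mathtt O}_B(N)$ is a function of $D$ alone, it commutes with every $r^{(N')}(D)$, so conjugating $z^{-1}$ by the successive factors of $\mathcal O$ multiplies it on the right by the corresponding $r^{(u_j)}(D)$, and conjugating by the factors of $\mathcal O^{-1}$ divides by $r^{(w_j)}(D)$; therefore
\be
\mathcal O\,z^{-1}\,\mathcal O^{-1}=z^{-1}\,\frac{\prod_{j=1}^{a}\big((D-1/2)^2-u_j^2\big)}{\prod_{j=1}^{b}\big((D-1/2)^2-w_j^2\big)}={\mathtt w}({\mathbf u},{\mathbf w}),
\ee
and, conjugation being an algebra automorphism of $w_{1+\infty}$, $\mathcal O\,z^{-k}\,\mathcal O^{-1}=\big(\mathcal O\,z^{-1}\,\mathcal O^{-1}\big)^{k}={\mathtt w}({\mathbf u},{\mathbf w})^{k}$ for every odd $k$.

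Now assemble. By (i) and (ii),
\be
\tau_{a,b}({\mathbf t},{\mathbf s},{\mathbf u},{\mathbf w})=\widehat{\mathcal O}\cdot e^{\frac12\sum_{k\in{\mathbb Z}_{\odd}^+}kt_ks_k}=\widehat{\mathcal O}\;e^{-\sum_{k\in{\mathbb Z}_{\odd}^+}s_k\widehat{W}_{z^{-k}}^B}\;\widehat{\mathcal O}^{-1}\cdot 1=\exp\Big(-\sum_{k\in{\mathbb Z}_{\odd}^+}s_k\,\widehat{\mathcal O}\,\widehat{W}_{z^{-k}}^B\,\widehat{\mathcal O}^{-1}\Big)\cdot 1 ,
\ee
so it suffices to show $\widehat{\mathcal O}\,\widehat{W}_{z^{-k}}^B\,\widehat{\mathcal O}^{-1}=\widehat{W}_{{\mathtt w}({\mathbf u},{\mathbf w})^{k}}^B$. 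Writing $\widehat{\mathcal O}=e^{\widehat{W}_ {\mathtt a} ^B}$ and expanding the conjugation as $\sum_{m\ge 0}\frac1{m!}(\operatorname{ad}_{\widehat{W}_ {\mathtt a} ^B})^m\widehat{W}_{z^{-k}}^B$, Lemma~\ref{Wcomm} applied term by term yields
\be
\widehat{\mathcal O}\,\widehat{W}_{z^{-k}}^B\,\widehat{\mathcal O}^{-1}=\widehat{W}_{\,e^{\operatorname{ad}_ {\mathtt a} }z^{-k}}^B+c_k=\widehat{W}_{\mathcal O z^{-k}\mathcal O^{-1}}^B+c_k ,
\ee
where $\operatorname{ad}_ {\mathtt a} $ is the bracket in $w_{1+\infty}$ and $c_k$ is a constant accumulated from the cocycle $\mu$. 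By (iii) the non-central part equals $\widehat{W}_{{\mathtt w}({\mathbf u},{\mathbf w})^{k}}^B$, and the theorem follows once $c_k=0$.

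It remains to kill the constants $c_k$. Equip $w_{1+\infty}$ with its ${\mathbb Z}$-grading by the power of $z$; then $\widehat{W}_ {\mathtt a} ^B$ carries the degree of $ {\mathtt a} $, the central element has degree $0$, and a rescaling of the two residue variables in the defining formula for $\mu$ shows immediately that $\mu( {\mathtt a} , {\mathtt b} )=0$ unless $\deg {\mathtt a} +\deg {\mathtt b} =0$. Here $ {\mathtt a} $ is a series in $D=z\p_z$, hence homogeneous of degree $0$, whereas $z^{-k}$ and all its iterated brackets $(\operatorname{ad}_ {\mathtt a} )^m z^{-k}$ are homogeneous of degree $-k\neq 0$; so every bracket occurring in the expansion has degree $-k$ and contributes nothing central, giving $c_k=0$. (Equivalently: both $\widehat{\mathcal O}\,\widehat{W}_{z^{-k}}^B\,\widehat{\mathcal O}^{-1}\cdot 1$ and $\widehat{W}_{{\mathtt w}({\mathbf u},{\mathbf w})^{k}}^B\cdot 1$ are homogeneous of positive degree $k$ in the grading $\deg t_j=j$, hence have vanishing degree-zero part, which forces $c_k=0$.) All manipulations are carried out in the ring of formal power series in ${\mathbf u}$ and $\{w_j^{-1}\}$, where only finitely many terms of the $\operatorname{ad}$-series contribute at each order, so there is no convergence issue. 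The one genuine point is this control of the central extension; the rest is an iteration of Lemmas~\ref{lem5} and~\ref{Wcomm}.
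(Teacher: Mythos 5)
Your proof is correct and follows essentially the same route as the paper: rewrite the Cauchy exponential using $k t_k=-2\widehat{W}^B_{z^{-k}}$, conjugate by the commuting group element (iterating the computation of Lemma~\ref{lem5} to obtain $\mathcal{O}\,z^{-k}\,\mathcal{O}^{-1}={\mathtt w}({\bf u},{\bf w})^k$), and apply Lemma~\ref{Wcomm}. Your explicit grading argument showing that the cocycle $\mu$ contributes nothing here (since $\deg {\mathtt a}+\deg {\mathtt b}=-k\neq 0$) is a correct and welcome detail that the paper leaves implicit in its appeal to Lemma~\ref{Wcomm}.
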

\begin{proof}
Comparing (\ref{wmin}), (\ref{JB}) and (\ref{Wop}) we have
\be
k t_k = -2 \widehat{W}_{z^{-k}} \,\,\,\,\, \mathrm{for}  \quad  k \in  {\mathbb Z}_{\odd}^+.
\ee
Therefore
\be
\tau_{a,b}({\bf t},{\bf s},{\bf u}, {\bf w})=e^{\sum_{j=1}^a \widehat{W}^B_{ \log( {\mathtt O}_B(u_j))}-\sum_{j=1}^b \widehat{W}^B_{ \log( {\mathtt O}_B(w_j))}}  \cdot e^{-\sum_{k\in {\mathbb Z}_{\odd}^+}    {s}_{k}  \widehat{W}_{z^{-k}} }.
\ee
Repeating the proof of Lemma  \ref{lem5} we have
\be
\frac{ \prod_{j=1}^a {\mathtt O}_B(u_j)}{ \prod_{j=1}^b {\mathtt O}_B(w_j)}\, z^{-1} \,  \frac{\prod_{j=1}^b {\mathtt O}_B(w_j)}{\prod_{j=1}^a {\mathtt O}_B(u_j)} = {\mathtt w}({\mathbf u}, {\mathbf w}),
  \ee
and the statement of the theorem follows from Lemma \ref{Wcomm}.
\end{proof}

\begin{remark}
From Lemma \ref{Wcomm} it follows that the operators $ \widehat{W}_{{\mathtt w}({\bf u})^k}$ commute with each other
\be
\left[ \widehat{W}_{{\mathtt w}({\bf u},{\bf w})^k}, \widehat{W}_{{\mathtt w}({\bf u},{\bf w})^m}\right]=0.
\ee
thus, from Theorem \ref{propcaj} we have the equation
\be
\frac{\p}{\p s_k}\tau_{a,b}({\bf t},{\bf s},{\bf u}, {\bf w})= -  \widehat{W}_{{\mathtt w}({\mathbf u},{\mathbf w})^k} \cdot  \tau_{a,b}({\bf t},{\bf s},{\bf u}, {\bf w}).
\ee
\end{remark}

The case $b=0$ leads to the simple group operators. We expect, that, while the case $b\neq 0$ leads to the more complicated group operators, it still can be described by interesting cut-and-join equations.

By construction, the cut-and-join operator belongs to the the symmetry algebra of the BKP hierarchy. 
If only a finite number of the variables $s_k$ do not vanish, then the sum $ \sum_{k\in {\mathbb Z}_{\odd}^+}   {s}_{k}  \widehat{W}_{{\mathtt w}({\bf u})^k}$ is finite, and for the finite number of parameters ${\bf u}$ the operator $ \widehat{W}_{{\mathtt w}({\bf u})^k}$ is a finite degree differential operator in variables ${\bf t}$. 

Consider the case $s_j=\delta_{j,1}\hbar/2$. It corresponds to the single weighted spin Hurwitz numbers. Then the sum over partitions (\ref{generex}) can be expanded as
\be
\tau_{a,0}({\bf t},\delta_{j,1}\hbar/2,{\bf u},{\bf 0})=\sum_{k=0}^\infty \hbar^k \tau_{a,0}^{(k)}({\bf t},{\bf u}),
\ee
where
\be
\tau_{a,0}^{(k)}({\bf t},{\bf u})= \sum_{\lambda \in \DP, |\lambda|=k} r_\lambda^{(u_1)}  r_\lambda^{(u_2)} \dots r_\lambda^{(u_a)}    \frac{Q_\lambda({\bf t}/2) Q_\lambda(\delta_{j,1}) }{ 4^{|\lambda|}2^{-\ell(\lambda)}  }.
\ee
Then the cut-and-join operator provides the recursion in degree of the ramified covering:
\be
\tau_{a,0}^{(k)}({\bf t},{\bf u}) = -\frac{1}{2k} \widehat{W}_{{\mathtt w}({\bf u})}\cdot \tau_{a,0}^{(k-1)}({\bf t},{\bf u}).
\ee
We have a similar expansion for $s_j=\delta_{j,k}\hbar/2$ with $k>1$, it corresponds to the case of orbifold spin Hurwitz numbers. 

For the simplest case $a=1$,  $s_j=\delta_{j,1}\hbar/2$ and $u_1=N$ this operator reduces the the cut-and-join operator (\ref{gcaj}) for the generalized BGW tau-function.


\subsection{Sato Grassmannian and Kac--Schwarz operators}
Let
\be
{\mathtt G}=\frac{\prod_{j=1}^a {\mathtt O}_B(u_j)}{ \prod_{j=1}^b {\mathtt O}_B(w_j)} 
e^{- \sum_{k\in {\mathbb Z}_{\odd}^+}   {s}_{k} z^{-k}  }
\ee
be the group operators of the $w_{1+\infty}^B$ algebra.
The point of the Sato Grassmannian, corresponding to the BKP tau-function (\ref{td}) is given by
\be
{\mathcal W}= {\mathtt G} \cdot{\mathcal W}_0,
\ee
where ${\mathcal W}_0=\{1,z,z^2,z^3\dots\}$ is a point of the Grassmannian for the trivial BKP tau-function $\tau({\bf t})=1$.

The normalized basis vectors for the tau-function (\ref{td}) are given by
\be\label{Phi}
\Phi_k=  \frac{\prod_{j=1}^b\Gamma(k-1/2+w_j)\Gamma(k-1/2-w_j)}{ \prod_{j=1}^a \Gamma(k-1/2+u_j)\Gamma(k-1/2-u_j)}  {\mathtt G}\cdot z^{k-1}.
\ee

Let us introduce the operators
\begin{equation}\label{pq}
\begin{split}
{\mathtt p}&:={\mathtt G} \, \p_z\, {\mathtt G}^{-1},\\
{\mathtt q}&:={\mathtt G}\, z\, {\mathtt G}^{-1}.
\end{split}
\end{equation}
These are the Kac-Schwarz operators
\be
{\mathtt p}\cdot  {\mathcal W}\subset {\mathcal W}, \,\,\,\,\,\,\, {\mathtt q}\cdot  {\mathcal W}\subset {\mathcal W},
\ee
 satisfying the canonical commutation relation
\be
\left[{\mathtt p},{\mathtt q}\right]={\mathtt 1}.
\ee

Let us introduce
\be
{\mathtt R}_m=\frac{ \prod_{j=1}^b ((D+1/2-m)^2-w_j^2)}{\prod_{j=1}^a ((D+1/2-m)^2-u_j^2)}.
\ee
For operators (\ref{pq}) we have
\begin{equation}
\begin{split}
{\mathtt q}&= \frac{\prod_{j=1}^a {\mathtt O}_B(u_j)}{ \prod_{j=1}^b {\mathtt O}_B(w_j)}  \,z \,
\frac{ \prod_{j=1}^b {\mathtt O}_B(w_j)}{\prod_{j=1}^a {\mathtt O}_B(u_j)} \\
 &=z \, {\mathtt R}_0
\end{split}
\end{equation}
and
\begin{equation}
\begin{split}
{\mathtt p}&=\frac{\prod_{j=1}^a((D+1/2)^2-u_j^2)}{\prod_{j=1}^b((D+1/2)^2-w_j^2)}\p_z- \sum_{k\in {\mathbb Z}_{\odd}^+} k  {s}_{k} {\mathtt q}^{-k-1}\\
&={\mathtt R}_0^{-1}\p_z- \sum_{k\in {\mathbb Z}_{\odd}^+} k  {s}_{k} z^{-k-1} \prod_{m=1}^{k+1} {\mathtt R}_m^{-1}.
\end{split}
\end{equation}

By construction for the wave function $\Psi=\Phi_1$ from (\ref{Phi}) we have
\be
{\mathtt p}\cdot \Psi(z) =0.
\ee
However, it is more convenient to introduce the operator 
\be
{\mathtt c}={\mathtt q}{\mathtt p}=D- \sum_{k\in {\mathbb Z}_{\odd}^+} k  {s}_{k} {\mathtt q}^{-k}
\ee
which also annihilates the wave function, ${\mathtt c}\cdot \Psi=0$.
Let us consider the case when $s_\ell \neq 0$ for some finite odd $\ell $ and $s_k=0$ for $k>\ell$. If $b=0$, then the operator ${\mathtt c}$ is a finite order differential operator with coefficients in ${\mathbb C}[z,z^{-1}]$. If $b>0$ we introduce
\be
{\mathtt K}=\prod_{r=1}^\ell\prod_{j=1}^b ((D+r-1/2)^2-w_j^2).
\ee
Then for the operator ${\mathtt A}:={\mathtt K}  {\mathtt c}$ we have
\begin{equation}
\begin{split}\label{qsc}
{\mathtt A}=D{\mathtt K}- \sum_{k=1}^\ell k  {s}_{k}   z^{-k}  \prod_{r=1}^{\ell-k}\prod_{j=1}^b ((D+r-1/2)^2-w_j^2)\prod_{m=0}^{k-1}  \prod_{j=1}^a ((D-m-1/2)^2-u_j^2).
\end{split}
\end{equation}
It is a polynomial of $z^{-1}$ and $D$, which annihilates the wave function
\be
{\mathtt A}\cdot \Psi(z)=0.
\ee
We can interpret it as the {\em quantum spectral curve}. For ${\mathtt K}\neq 1$ this operator, in general, is not a Kac-Schwarz operator.

\begin{conjecture}
The $n$-point functions  for the tau-function (\ref{td}) can be given by the topological recursion, associated with the semi-classical limit of the quantum spectral curve (\ref{qsc}).
\end{conjecture}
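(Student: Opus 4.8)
The plan is to follow the by-now standard passage from a quantum spectral curve to the Chekhov--Eynard--Orantin topological recursion, adapted to the neutral-fermion (BKP) setting and paralleling the treatment of the ordinary weighted Hurwitz numbers in \cite{AN,ALS}. First I would make the genus expansion precise: after restoring $\hbar$ by the rescaling of ${\bf t},{\bf s},{\bf u},{\bf w}$ indicated below \eqref{td}, one organises $\log\tau_{a,b}$ into a genus expansion $\sum_{g\ge 0}\hbar^{2g-2}{\mathcal F}_g$ and defines the connected $n$-point differentials $W_{g,n}(z_1,\dots,z_n)$ as the $\hbar$-coefficients of the connected correlators obtained by differentiating in the odd times ${\bf s}$ and then performing, in each variable $z_i$, a principal specialisation $s_k\mapsto\frac{2}{k}z_i^{-k}$ (equivalently, inserting Baker--Akhiezer data at the $z_i$). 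The first substantive step is to express these $W_{g,n}$ as Pfaffians of the BKP reproducing kernel of the Sato Grassmannian point ${\mathcal W}={\mathtt G}\cdot{\mathcal W}_0$, via the free-fermion formula for $\tau_{a,b}$ and Wick's theorem for neutral fermions; the diagonal form of ${\mathtt G}$ and the explicit basis \eqref{Phi} make this kernel completely explicit.

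\textbf{Spectral curve and low-order checks.} Second, I would extract the classical spectral curve as the semiclassical limit of \eqref{qsc}. Since ${\mathtt A}$ annihilates the wave function $\Psi=\Phi_1$, writing $\Psi\sim\exp(\hbar^{-1}S_0(z)+\cdots)$ and setting $y:=z\,\p_z S_0$ (the momentum conjugate to $x=z$), the leading order of ${\mathtt A}\cdot\Psi=0$ gives an algebraic relation $P(x,y)=0$ --- in effect the principal symbol of \eqref{qsc} with $D$ replaced by $y$. One then locates its branch points (zeros of $dx$ on the normalisation), checks $W_{0,1}=y\,dx$ and that $W_{0,2}$ is the canonical bidifferential, and matches the genus-zero and cylinder amplitudes against the explicit hypergeometric expansion \eqref{generex}, whose coefficients are built from the $r_\lambda^{(u_j)}$.

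\textbf{Loop equations.} The core step is to prove that the $W_{g,n}$ satisfy the abstract linear and quadratic loop equations with the correct analytic behaviour --- holomorphic away from the branch points, with poles only at branch points and without residues when $2g-2+n>0$ --- whence the equivalence between loop equations and topological recursion identifies the $W_{g,n}$ with the recursion invariants of the spectral curve. These loop equations should follow from the Pfaffian structure of the correlators together with the quantum curve equation ${\mathtt A}\cdot\Psi=0$ and its analogue for the dual wave function, expanded order by order in $\hbar$, as in \cite{ALS}. A possibly shorter route is to invoke the square-root relation to the hypergeometric KP tau-functions recalled in the remark following \eqref{generex} (see also \cite{OBKP,root}): the $n$-point functions of \eqref{td} would then be controlled by those of the associated ordinary weighted Hurwitz numbers with rational weight generating function, for which topological recursion is known \cite{AN,ALS}, and it would remain only to check that the square-root correspondence intertwines the two recursions and maps the KP spectral curve to the semiclassical limit of \eqref{qsc}.

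\textbf{Main obstacle.} I expect the principal difficulty to be twofold. For $b\neq 0$ the operator ${\mathtt A}={\mathtt K}\,{\mathtt c}$ is no longer a Kac--Schwarz operator, so one must pass carefully from ``${\mathtt A}$ annihilates $\Psi$'' to a statement controlling ${\mathcal W}$ itself, and hence to clean loop equations, while keeping track of the spurious solutions introduced by the factor ${\mathtt K}$; correspondingly, the classical curve acquires extra components from the zeros of the symbol of ${\mathtt K}$ which must be shown to carry no branch points of $x$. The second, more structural issue is that the BKP correlators are Pfaffian rather than determinantal, so what one is really proving is a spin/BKP version of topological recursion; identifying precisely which bidifferential plays the role of $W_{0,2}$ and whether additional theta-characteristic data must be incorporated --- as the spin-Hurwitz interpretation of \eqref{td} suggests --- is the delicate point, and is presumably why the statement is left as a conjecture here.
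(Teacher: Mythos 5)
This statement is left as a \emph{conjecture} in the paper --- no proof is given there --- so there is nothing to compare your argument against; the only question is whether your proposal actually establishes the claim, and it does not. What you have written is a plausible research programme, but every step that carries the mathematical content is deferred rather than executed. The heart of the matter is your ``loop equations'' step: for the determinantal (KP) case the passage from a quantum curve plus Wick's theorem to the abstract loop equations is known, but for the neutral-fermion setting the correlators of (\ref{td}) are Pfaffians, and no BKP analogue of the loop-equation machinery is exhibited or cited; you yourself note that it is unclear which bidifferential should play the role of $W_{0,2}$ and whether extra spin data enters. Without that, the equivalence ``loop equations $\Rightarrow$ topological recursion'' has nothing to latch onto, and the conjecture remains exactly as open as before. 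The same applies to your proposed shortcut via the square-root relation to hypergeometric KP tau-functions: the claim that this correspondence ``intertwines the two recursions'' is precisely a reformulation of the conjecture, not a reduction of it to known results, since the square root acts on the tau-function and there is no established statement about how it acts on connected $n$-point differentials or on spectral curves.

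A second concrete gap is the treatment of the operator ${\mathtt A}={\mathtt K}\,{\mathtt c}$ for $b\neq 0$. The paper points out that ${\mathtt A}$ is in general \emph{not} a Kac--Schwarz operator, so ``${\mathtt A}\cdot\Psi=0$'' constrains only the wave function, not the whole point ${\mathcal W}$ of the Sato Grassmannian; your plan acknowledges this but offers no mechanism for recovering the stability of ${\mathcal W}$ that loop-equation derivations in \cite{ALS,AN} rely on, nor for discarding the spurious components of the classical curve coming from the symbol of ${\mathtt K}$. To be clear: flagging these obstacles honestly is to your credit, but a proof proposal whose ``main obstacles'' coincide with the entire substance of the statement is a statement of the problem, not a proof of it.
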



\subsection{Examples}

The case of the generalized BGW tau-function, considered in Section \ref{GBGW}, corresponds to
$a=1$, $b=0$,  $s_j=\delta_{j,1}\hbar/2$. In this case
\be
{\mathtt R}_m=\frac{1}{ (D+1/2-m)^2-u^2}.
\ee
The Kac--Schwarz operators, introduced in the previous section, are
\begin{equation}
\begin{split}
{\mathtt p}&=( (D+1/2)^2-u^2)\p_z-\frac{\hbar}{2} z^{-2}  {\mathtt R}_1^{-1} {\mathtt R}_2^{-1},\\
{\mathtt q}&=z\frac{1}{ (D+1/2)^2-u^2},
\end{split}
\end{equation}
the operator
\be\label{cgBGW}
{\mathtt c}=D-\frac{\hbar}{2}z^{-1} ( (D-1/2)^2-u^2)
\ee
coincides with the quantum spectral curve operator ${\mathtt A}$. The quantum spectral curve equation in this case is closely related to the Bessel equation.

We expect, that the spin Hurwitz numbers interpretation of the case with $a=0$, $b=1$ is particularly interesting. Namely, this case can be associated with a spin version of the monotone Hurwitz numbers. Let us consider it in more detail for $s_j=\delta_{j,1}\hbar/2$. Then
\be
{\mathtt R}_m=  (D+1/2-m)^2-w^2,
\ee
and the Kac--Schwarz operators are
\begin{equation}
\begin{split}
{\mathtt p}&=\frac{1}{ (D+1/2)^2-w^2}\p_z-\frac{\hbar}{2} z^{-2} {\mathtt R}_1^{-1} {\mathtt R}_2^{-1},\\
{\mathtt q}&=z ((D+1/2)^2-w^2),
\end{split}
\end{equation}
and
\be
{\mathtt c}=D-\frac{\hbar}{2}z^{-1} \frac{1}{(D-1/2)^2-w^2}.
\ee
Then the quantum spectral curve is given by the third order differential operator
\be
{\mathtt A}=((D-1/2)^2-w^2)D-\frac{\hbar}{2}z^{-1}.
\ee


\subsection{Constraints}

We expect that the tau-functions (\ref{td}) satisfy the Virasoro and W-constraints, analogous to ones obtained in \cite{KZ,AN}. Some of these constraints can be derived using the Kac-Schwarz approach for BKP hierarchy. However, not all of them can be derived this way. 

The illustrative example here is the generalized BGW tau-function, considered in the previous section. This model is completely described by the Virasoro constraints, derived in \cite{ABGW} using the Kac-Schwarz description of the KP hierarchy.  However, these constraints do not appear in the BKP picture. Hence, for the generalized BGW tau-function, the KP and BKP pictures are complimentary to each other: the Virasoro constraints are natural symmetries of the KP hierarchy, while the cut-and-join operator is an element of the BKP symmetry group. We expect that such interplay between different integrable structures should be common for the enumerative geometry tau-functions.

The simplest constraint for the generalzed BGW model, which can be obtained from the BKP approach, is given by
\be\label{Wop6}
\widehat{W}^B_{{\mathtt c}} \cdot \tau_{BGW}({\bf t}/2,u)=\mu \tau_{BGW}({\bf t}/2,u)
\ee
where  ${\mathtt c}$ is given by (\ref{cgBGW}), and $\mu$ is some eigenvalue. From the consideration of this equation at ${\bf t}={\bf 0}$ it is easy to see that $\mu=0$. The operator
\be
\widehat{W}^B_{{\mathtt c}}=-\frac{1}{2}\widehat{W}_{0,1}^B+ \frac{\hbar}{4}\widehat{W}_{-1,2}^B+\hbar\left(\frac{1}{16}-\frac{u^2}{4}\right)\widehat{W}_{-1,0}^B
\ee
describes the cut-and-join equation. Indeed, comparing this operator with \eqref{gcaj}, we see that \eqref{Wop6} reduces to
\be
\frac{1}{2}\widehat{W}_{0,1}^B \cdot \tau_{BGW}({\bf t}/2,u) = \hbar \widehat{W}_0(u) \cdot \tau_{BGW}({\bf t}/2,u). 
\ee
Since $\widehat{W}_{0,1}^B=\widehat{L}_0^B=2 \sum_{k \in {\mathbb Z}_{\odd}^+} k t_k\frac{\p}{\p t_k}$, we have the cut-and-join equation
\be
 \sum_{k \in {\mathbb Z}_{\odd}^+} k t_k\frac{\p}{\p t_k} \tau_{BGW}({\bf t}/2,u) = \hbar \widehat{W}_0(u) \cdot \tau_{BGW}({\bf t}/2,u)
\ee
with the unique solution normalized by $\tau_{BGW}({\bf 0},u)=1$. This solution is given by \eqref{cajg}.

\bibliographystyle{alphaurl}
\bibliography{KPTRref}

\end{document}